\newtheorem{Definition}{Definition}
\newtheorem{Theorem}{Theorem}
\newtheorem{Assumption}{Assumption}
\newtheorem{Lemma}{Lemma}
\newtheorem{Remark}{Remark}
\newtheorem{Corollary}{Corollary}
\newtheorem{proof}{Proof}
\title{Robustness of Online Identification-based Policy Iteration to Noisy Data
}
\author{
  Bowen Song, Andrea Iannelli \\
  Institute for Systems Theory and Automatic Control \\
  University of Stuttgart \\
  Stuttgart\\
  \texttt{\{bowen.song, andrea.iannelli\}@ist.uni-stuttgart.de} \\
}
\begin{document}
\maketitle

\begin{abstract}
This article investigates the core mechanisms of indirect data-driven control for unknown systems, focusing on the application of policy iteration (PI) within the context of the linear quadratic regulator (LQR) optimal control problem. Specifically, we consider a setting where data is collected sequentially from a linear system subject to exogenous process noise, and is then used to refine estimates of the optimal control policy. We integrate recursive least squares (RLS) for online model estimation within a certainty-equivalent framework, and employ PI to iteratively update the control policy. In this work, we investigate first the convergence behavior of RLS under two different models of adversarial noise, namely point-wise and energy bounded noise, and then we provide a closed-loop analysis of the combined model identification and control design process. This iterative scheme is formulated as an algorithmic dynamical system consisting of the feedback interconnection between two algorithms expressed as discrete-time systems. This system theoretic viewpoint on indirect data-driven control allows us to establish convergence guarantees to the optimal controller in the face of uncertainty caused by noisy data. Simulations illustrate the theoretical results.
\end{abstract}

\keywords{Data-driven Control \and Policy Iteration \and Recursive Least Squares \and Robustness \and Nonlinear Systems}

\section{Introduction} 
Data-driven control is a very active area of research aimed at developing control strategies for systems where a precise mathematical model is unavailable, a scenario increasingly common in complex modern applications. This field encompasses a wide range of approaches with different problem settings, techniques, and objectives. It is beyond the scope of this Introduction to review them all, and we refer the reader to the following works and references therein \cite{HOU20133,khaki2023introduction,10156081,Doerfler_CSM_23_Pt2,FAULWASSER202392,annurev:/content/journals/10.1146/annurev-control-030323-024328}. One direction closely related to this work is indirect data-driven control, where data is collected first to estimate a system model, which is then used inside model-based control methods \cite{articlesimulation,8732482,chatzikiriakos2024endtoendguaranteesindirectdatadriven,pmlr-v178-tsiamis22a}. This approach makes use of system identification \cite{ljung1999system} and, in cases where the controller is updated during operation, is related to indirect adaptive control \cite{AC} or model-based reinforcement learning \cite{annurev:/content/journals/10.1146/annurev-control-062922-090153}. By blending data-driven insights with established model-based strategies, indirect data-driven control offers a flexible framework for tackling control problems in complex, dynamic environments.

In this article, we focus on a classic problem of increasing importance  in the optimal control and reinforcement learning communities: policy iteration (PI) for solving the linear quadratic regulator (LQR) problem. PI is a dynamic programming algorithm for optimal control \cite{lewis2012optimal,Bertsekas2022_Abstr_DP} that plays a foundational role in approximate dynamic programming and reinforcement learning algorithms \cite{park2020structured,9794431,9444823,BERTSEKAS_Newton_method_22,ECC}.
The PI algorithm consists of two main steps—policy evaluation and policy improvement—both of which traditionally rely on an accurate model of the plant. In the standard formulation, convergence to the optimal policy is guaranteed under certain assumptions about the cost and system's dynamics \cite{doi:10.1080/00207179.2015.1079737}.

The LQR problem is a foundational optimal control problem frequently used as a benchmark to compare data-driven control approaches and, owing to its tractability, analytically understand their fundamental properties \cite{articlesimulation,9444823,ECC,9691800,10383604,IJRNC}. For example, \cite{9444823} investigates the impact of additive uncertainties in model-based PI for continuous-time LQR, while \cite{ECC} examines the robustness of PI in the presence of parameter uncertainties. {The regret analysis of system identification and LQR algorithms has been investigated from a statistical learning perspective in \cite{pmlr-v119-cassel20a, pmlr-v19-abbasi-yadkori11a, pmlr-v119-simchowitz20a}.} In \cite{10383604}, a data-driven policy gradient method that integrates recursive least squares (RLS) with a model-based policy gradient approach is proposed, with convergence analyzed using averaging theory, {and in \cite{10383516}, an adaptive control framework is proposed for LQR problem}. Additionally, the authors' previous work \cite{IJRNC} compared indirect and direct data-driven PI for LQR and provided their advantages and disadvantages via theoretical analysis. {System-theoretic tools \cite{10540567} were employed for analysis in \cite{10383604, IJRNC,10383516}. The works discussed earlier \cite{10383604, IJRNC,pmlr-v119-cassel20a, pmlr-v19-abbasi-yadkori11a, pmlr-v119-simchowitz20a} adopt an indirect data-driven control framework, which involves system identification followed by controller design based on the estimated system dynamics. In contrast, the direct data-driven control framework bypasses the system identification step and directly optimizes the control policy. In \cite{pmlr-v80-fazel18a, 10091214}, direct data-driven policy gradient methods leveraging zeroth-order optimization were proposed for the noise-free discrete-time and continuous-time LQR problem. Similarly, \cite{carnevale2024datadrivenlqrfinitetimeexperiments} introduced a data-driven policy gradient method incorporating a novel zeroth-order gradient estimation technique for the noise-free LQR problem. More recently, \cite{zhao2024dataenabledpolicyoptimizationdirect} proposed a direct adaptive data-driven policy gradient method to handle LQR with noise.} 

In this study, we develop an indirect data-driven policy iteration approach to solve the LQR problem for an unknown system subject to additive adversarial process noise. Specifically, we consider the twofold scenario where the noise is point-wise bounded and energy bounded. We begin by examining the convergence properties of RLS identification, providing a finite-sample analysis that extends existing asymptotic convergence results for RLS \cite{bruce2020convergence}. Our analysis is meaningful for providing guarantees in indirect data-driven control that employ RLS for online system identification.
Then, we consider the feedback interconnection between the RLS algorithm and the PI scheme, where the gain matrix is refined iteratively through PI steps that use model estimates generated by RLS from online noisy data. By leveraging an algorithmic dynamical systems viewpoint on this interconnection, we frame this iterative process as a nonlinear feedback interconnection and carry out a system theoretic closed-loop analysis. With these results, we establish the conditions under which the algorithmic system converges to the desired values (i.e., the optimal controller and the true system model) and, if convergence is not achieved, we provide a guaranteed upper bound on the suboptimal solution. Our analysis captures the noise in the online collected data as a source of disturbance, enabling an input-to-state stability result with an intuitive, practical interpretation. In contrast to previous studies, such as {\cite{10383604,IJRNC,10383516,pmlr-v80-fazel18a,10091214,carnevale2024datadrivenlqrfinitetimeexperiments}}, which assume noise-free data, our approach accommodates adversarial noise and relaxes assumptions necessary for closed-loop analysis compared to our previous work \cite{IJRNC}. The analysis in our work provides insights into the impact of noise within the indirect data-driven policy iteration framework. This work serves as an example for analyzing online concurrent learning and controller design algorithms, highlighting how noise influences convergence and control performance.

The main contributions of this work are summarized as follows:
\begin{itemize}
\item Convergence analysis of RLS under pointwise bounded noise and energy-bounded noise.
\item A system-theoretic analysis of the concurrent learning and controller design algorithm using noisy data.
\end{itemize}
The paper is organized as follows: Section \ref{sec:PSP} introduces the problem setting and provides essential preliminaries. Section \ref{RLSsystem} investigates the convergence properties of recursive least squares with adversarial noisy data. Section \ref{sec:main} details the methodologies of the indirect data-driven policy iteration and analyzes the convergence properties of the coupled RLS and PI system. Section \ref{sec:simulation} illustrates the theoretical findings. Finally, Section \ref{sec:conclusion} provides a concluding summary of the work.

\subsubsection*{Notations:}
We denote by $A\succeq 0$ and $A\succ0$ a positive semidefinite and positive definite matrix $A$, respectively. The symbol $\mathbb{S}^n_+$ represents the set of real $n\times n$ symmetric and positive semidefinite matrices. The sets of non-negative and positive integers are denoted by $\mathbb{Z}_+$ and $\mathbb{Z}_{++}$, respectively. The Kronecker product is represented as $\otimes$, and $vec(A)=[a_1^\top,a_2^\top,...,a_n^\top]^\top$ stacks the columns $a_i$ of matrix $A$ into a vector. The symbols $\lfloor x \rfloor$ and $\lceil x \rceil$ denote the floor function, which returns the greatest integer smaller or equal than $x\in\mathbb{R}$, and ceil function, which returns the smallest integer greater or equal than $x\in\mathbb{R}$, respectively. 
For matrices and vectors, $\lvert \cdot\rvert$ denotes their Frobenius and Euclidean norm, respectively. A function belongs to class $\mathcal{K}$ if it is continuous, strictly increasing, and vanishing at the origin. A function $\beta(x , t)$ is called $\mathcal{KL}$ function if $\beta(x , t)$ decreases to 0 as $t\rightarrow0$ for every $x\geq0$ and $\beta(\cdot , t)\in \mathcal{K}$ for all $t\geq 0$. For $Y \in \mathbb{R}^{m \times n}$ and $r>0$, we define $\mathcal{B}_{r}(Y) := \{X \in \mathbb{R}^{m \times n} {\big|} ~|X-Y| < r\}$. We consider generic sequences $\{Y_t\}$ as maps $\mathbb{Z}_+\rightarrow \mathbb{R}^{m \times n}$, and we denote by $\lVert Y \rVert_\infty:=\sup\limits_{t\in{\mathbb{Z}_+}}|Y_t|$ and $\lVert Y \rVert_2:=\sum\limits_{t=0}^{\infty}|Y_t|$.

\section{Problem Setting and Preliminaries}\label{sec:PSP}
We consider discrete-time linear time-invariant (LTI) systems of the form
\begin{equation}\label{LTI}
  x_{t+1}=Ax_t+Bu_t+w_t,
\end{equation}
where $x_t\in \mathbb{R}^{n_x}$ is the system state, $u_t\in \mathbb{R}^{n_u}$ is the control input and $t$ denotes the timestep. The system matrices $(A,B)$ are unknown but assumed to be stabilizable, as is standard in data-driven control approaches \cite{9691800, pmlr-v99-tu19a}; $w_t\in \mathbb{R}^{n_x}$ represents the adversarial process noise acting on the system. 

In this work, we consider two models for the noise: 
\begin{itemize}
    \item point-wise bounded noise \cite{10644295}, where the noise sequence $\{w_t\}$ satisfies:
\begin{equation}\label{Instant}
        \lVert w \rVert_\infty \leq L_\infty, \quad L_\infty \in (0,\infty),
    \end{equation}
    where $L_\infty$ is an upper bound on the noise magnitude;
    \item energy bounded noise \cite{VENKATASUBRAMANIAN2024556}, where the noise sequence $\{w_t\}$ satisfies: 
     \begin{equation}\label{Energy}
        \lVert w \rVert_2\leq {L}_2, \quad L_2 \in (0,\infty),
    \end{equation}
    where $L_2$ represents the noise energy. This type of noise is a specific form of case of point-wise bounded noise with the additional property, the magnitude of the noise converges to zero quickly enough to be summable, implying $\lim\limits_{t\rightarrow \infty}\lvert w_t\rvert =0$, which offers advantages in certain control applications. 
\end{itemize}
    
The objective is to design a state-feedback controller $u_t=Kx_t$ that minimizes the following infinite horizon cost {for the noise-free plant}:
\begin{equation}\label{Cost}
  J(x_t,K)=\sum\limits_{k=t}^{+\infty} r(x_k,u_k)=\sum\limits_{k=t}^{+\infty} x_k^\top Qx_k+u_k^\top Ru_k,
\end{equation}
where $R\succ 0$ and $Q\succeq 0$. When a stabilizing gain $K$ is applied, ensuring that $A+BK$ is Schur stable, the corresponding cost $J(x_t, K)$ can be expressed in terms of the quadratic form $x_t^\top Px_t$. Here, $P\succ 0$ represents the quadratic kernel of the cost function associated with $K$ \cite{9691800}, which is determined by the model-based Bellman equation:
\begin{equation}\label{MBBE}
  P=Q+K^\top RK+(A+BK)^\top P(A+BK).
\end{equation} 
In optimal control theory \cite{lewis2012optimal}, it is well established that the solution to the linear quadratic regulator (LQR) problem is a linear state-feedback control. The optimal feedback gain $K^*$ is determined by:
\begin{subequations}\label{Kpolicyimprovement_EQ}
\begin{align}
K^*&=-(R+B^\top P^*B)^{-1}B^\top P^*A, \label{Kpolicyimprovement} \\
P^*&=Q+A^\top P^*A-A^\top P^*B(R+B^\top P^*B)^{-1}B^\top P^*A, \label{DARE} 
\end{align}
\end{subequations}
where $P^*$ is the quadratic kernel of the optimal cost (value function) and is the unique solution of the discrete algebraic Riccati equation (DARE) in \eqref{DARE}. The system of equations in \eqref{Kpolicyimprovement_EQ} provides a way to compute the optimal feedback gain $K^*$ that minimizes cost \eqref{Cost}.

\subsection{Policy Iteration} \label{2024secPI}
Even when the system model is known, directly solving DARE \eqref{DARE} can become computationally challenging for high-dimensional systems. Policy iteration (PI) offers an efficient, iterative method to compute the optimal gain $K^*$ by-passing this calculation. The fundamental model-based version of the PI algorithm \cite{1099755}, which requires knowledge of the system matrices $A$ and $B$, is summarized in Algorithm \ref{Algo1}.

\begin{algorithm}
  \caption{Model-based policy iteration.}\label{Algo1}
  \begin{algorithmic}
      \Require $A,B$, a stabilizing policy gain $K_0$ 
      \For{$i=0,1,...,+\infty$}
        \State \textbf{Policy Evaluation: find $P_{i}$} 
        \State \begin{equation}\label{2024PE}
            P_{i}=Q+K_i^\top RK_{i}+(A+BK_{i})^\top P_{i}(A+BK_{i})
        \end{equation}
        \State \textbf{Policy Improvement: update gain $K_{i+1}$}
        \State \begin{equation}\label{2024PI}
            K_{i+1}=-(R+B^\top P_iB)^{-1}B^\top P_iA
        \end{equation} 
      \EndFor
  \end{algorithmic}
\end{algorithm}

The key properties of Algorithm \ref{Algo1} are presented in the following theorem.
\begin{Theorem}{Properties of model-based PI {\cite{1099755}\cite[Theorem 4]{IJRNC}}} \label{theorem1}
 \\
If the system dynamics $(A,B)$ are stabilizable, and $K_0$ is stabilizing, then 
  \begin{enumerate}
    \item $P_0\succeq P_1 \succeq ... \succeq P^*$;
    \item $K_i$ stabilizes the system $(A,B)$, $\forall i\in \mathbb{Z}_+$;
    \item $\lim\limits_{i\rightarrow\infty}P_i=P^*$, $\lim\limits_{i\rightarrow\infty}K_i=K^*$;
    \item $\lvert P_{i+1}-P^* \rvert \leq c \lvert P_{i}-P^* \rvert$ with $c<1$, $\forall i \in \mathbb{Z}_+$.
  \end{enumerate}
\end{Theorem}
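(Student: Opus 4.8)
The plan is to follow the classical Hewer/Kleinman argument, reading policy iteration as Newton's method on the discrete algebraic Riccati equation, and to prove statements 1 and 2 simultaneously by induction on $i$, then deduce 3 and 4. The engine of the whole proof is a single ``completing-the-square'' identity. Assuming inductively that $K_i$ is stabilizing, the evaluation equation \eqref{2024PE} is a discrete Lyapunov equation in $P_i$ with Schur $A+BK_i$, so $P_i\succeq 0$ is well defined and unique. Writing $\mathcal{A}_{i+1}:=A+BK_{i+1}$ and using \eqref{2024PI} in the form $(R+B^\top P_iB)K_{i+1}=-B^\top P_iA$, I would first derive the identity $B^\top P_i\mathcal{A}_{i+1}=-RK_{i+1}$, and then, by expanding $A+BK_i=\mathcal{A}_{i+1}+B(K_i-K_{i+1})$ inside \eqref{2024PE}, obtain
\[
 P_i=\mathcal{A}_{i+1}^\top P_i\mathcal{A}_{i+1}+Q+K_{i+1}^\top RK_{i+1}+E_i^\top(R+B^\top P_iB)E_i,
\]
\[
 P_i-P_{i+1}=\mathcal{A}_{i+1}^\top(P_i-P_{i+1})\mathcal{A}_{i+1}+E_i^\top(R+B^\top P_iB)E_i,
\]
where $E_i:=K_i-K_{i+1}$ and both source terms are positive semidefinite since $R\succ 0$ and $P_i\succeq 0$.

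Statement 2 comes first. The first identity exhibits $P_i$ as the solution of a Lyapunov equation for $\mathcal{A}_{i+1}$ with a positive-semidefinite right-hand side; together with $P_i\succeq 0$ this forces the spectral radius of $\mathcal{A}_{i+1}$ to be at most $1$, and strict Schurness follows from the usual eigenvector argument: if $\mathcal{A}_{i+1}v=\lambda v$ with $|\lambda|\ge 1$, the identity gives $(1-|\lambda|^2)v^*P_iv=v^*[\,\cdot\,]v\ge 0$, forcing $Qv=0$ and $K_{i+1}v=0$, hence $Av=\lambda v$, which contradicts detectability of $(A,Q^{1/2})$ — the standard LQR condition implicit in requiring a well-posed problem with a unique stabilizing $P^*$. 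This closes the induction. Statement 1 then follows at once: with $\mathcal{A}_{i+1}$ Schur, the second identity has the unique solution $P_i-P_{i+1}=\sum_{k\ge 0}\mathcal{A}_{i+1}^{\top k}E_i^\top(R+B^\top P_iB)E_i\,\mathcal{A}_{i+1}^{k}\succeq 0$, so $P_i\succeq P_{i+1}$; and the analogous comparison with the DARE fixed point, using the Bellman identity $Q+K^\top RK+(A+BK)^\top P^*(A+BK)-P^*=(K-K^*)^\top(R+B^\top P^*B)(K-K^*)$ at $K=K_{i+1}$, gives $P_{i+1}\succeq P^*$, hence $P_0\succeq P_1\succeq\cdots\succeq P^*$.

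Statement 3 is then soft analysis: $\{P_i\}$ is monotonically non-increasing in the positive-semidefinite order and bounded below by $P^*$, so it converges to some $\bar P\succeq P^*$; passing to the limit in \eqref{2024PE}–\eqref{2024PI} shows $\bar P$ solves the DARE \eqref{DARE} with $A+B\bar K$ Schur, and uniqueness of the stabilizing solution forces $\bar P=P^*$. Continuity of the improvement map \eqref{2024PI} in $P$ then yields $K_i\to K^*$.

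The main obstacle is statement 4, the uniform linear rate. My plan is to use the error identity
\[
 P_{i+1}-P^*=\mathcal{A}_{i+1}^\top(P_{i+1}-P^*)\mathcal{A}_{i+1}+(K_{i+1}-K^*)^\top(R+B^\top P^*B)(K_{i+1}-K^*),
\]
together with the fact that the improvement map is continuously differentiable and sends $P^*$ to $K^*$, so that $|K_{i+1}-K^*|\le \ell\,|P_i-P^*|$ on the compact order interval $[P^*,P_0]$ containing every iterate; bounding the Lyapunov solution operator $X\mapsto\sum_{k\ge 0}\mathcal{A}_{i+1}^{\top k}X\mathcal{A}_{i+1}^{k}$ uniformly over the relatively compact family $\{\mathcal{A}_{i+1}\}$ then gives the quadratic estimate $|P_{i+1}-P^*|\le M\,|P_i-P^*|^2$. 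The delicate point is converting this into a single constant $c<1$ valid for all $i$: since policy iteration is Newton's method for the DARE, the Fréchet derivative of $P_i\mapsto P_{i+1}$ vanishes at $P^*$, giving at least linear decay of the tail, while the uniform bound $|P_i-P^*|\le|P_0-P^*|$ from monotonicity controls the finitely many early steps. Because each step strictly reduces the error until $P^*$ is reached and the tail decays super-linearly, the supremum of the per-step ratios $|P_{i+1}-P^*|/|P_i-P^*|$ is attained and strictly below $1$, which furnishes the claimed $c$. Making the constants $\ell$ and $M$ uniform along the iteration — which rests precisely on compactness of $[P^*,P_0]$ and a uniform Schurness margin for the $\mathcal{A}_{i+1}$ — is where the real work lies.
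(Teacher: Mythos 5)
The paper does not prove this theorem; it is imported verbatim from Hewer \cite{1099755} and \cite[Theorem 4]{IJRNC}, and your argument is exactly the classical completing-the-squares/Newton proof that those sources give (the two Lyapunov identities for $P_i$ and $P_i-P_{i+1}$, the eigenvector argument for stability, monotone convergence, and the quadratic-rate estimate upgraded to a per-trajectory contraction factor $c<1$ via the fact that the step ratios are each strictly below one and tend to zero). The sketch is correct as far as it goes; the only point worth flagging is that your proof of item 2 (and the uniqueness of the stabilizing DARE solution invoked in item 3) uses detectability of $(A,Q^{1/2})$, which is not among the stated hypotheses — you are right that it is implicit in the paper's assumption that $P^*\succ 0$ is the unique stabilizing solution of \eqref{DARE}, but under stabilizability and $Q\succeq 0$ alone the statement as written would fail, so this is a gap in the theorem's hypotheses rather than in your argument. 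Note also that the $c$ you construct depends on the trajectory (hence on $P_0$), which is consistent with the theorem as stated but weaker than the uniform constant the paper later reuses in the perturbed iteration \eqref{ISS17}.
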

This theorem establishes that, under stabilizability of \eqref{LTI} and appropriate initialization of $K_0$, the sequence $\{P_i\}$ generated by policy iteration converges exponentially to the optimal solution $P^*$, with $K_i$ stabilizing the system at each iteration. Theorem \ref{theorem1} is a standard result on PI. However, leveraging a dynamical system viewpoint, we can obtain an additional results.
\subsubsection{PI System Analysis}
In \cite{ECC}, we investigated the convergence of PI algorithm with nominal system $(A,B)$ by equivalently reformulating it as a dynamical system. The main steps are as follows. Define the functions $\alpha(P_i):=B^\top P_iA$ and $\beta(P_i):=R+B^\top P_iB$, where $\beta(P_i)$ is a positive definite matrix and thus always invertible. By substituting the policy improvement step \eqref{2024PI} into the policy evaluation step \eqref{2024PE}, the relationship between $P_i$ and $P_{i+1}$ is given by:
\begin{equation}\label{relationPseq}
\begin{split}
   P_{i+1}&=Q+A^\top P_{i+1}A\\
   &~+\alpha (P_i)^\top\beta(P_i)^{-1}\beta(P_{i+1})\beta(P_i)^{-1}\alpha(P_i) \\
      &~-\alpha (P_{i+1})^\top\beta(P_i)^{-1}\alpha(P_i)\\
      &~-\alpha (P_{i})^\top\beta(P_i)^{-1}\alpha(P_{i+1}).
\end{split}
\end{equation}
Using the identity $vec(EFG)=(F^\top  \otimes E) vec(G)$ from \cite{Petersen2008} and defining 
\begin{equation}\label{Gmma}
    \Gamma(P_i):=Q+\alpha (P_i)^\top\beta(P_i)^{-1}R\beta(P_i)^{-1}\alpha(P_i),
\end{equation}
we can rewrite \eqref{relationPseq} as:
\begin{equation}\label{rerelationPseq2}
\begin{split}
      \mathcal{A}(P_i)vec(P_{i+1})=vec\left(\Gamma(P_i)\right),
  \end{split}
\end{equation}
where $\mathcal{A}(P_i):=I_{n_x}\otimes I_{n_x}-\Omega(P_i) \otimes\Omega(P_i)$ and $\Omega(P_i):=A^\top -\alpha (P_i)^\top\beta(P_i)^{-1}B^\top$. \\
If $\mathcal{A}(P_i)$ is invertible, we have:
\begin{equation}\label{ISS31}
\begin{split}
    &vec(P_{i+1})=\mathcal{A}(P_i)^{-1} vec\left(\Gamma(P_i)\right).
\end{split}
\end{equation}
The transformation from \eqref{relationPseq} to \eqref{ISS31} involves reshaping the vectorized terms back into a square matrix, thereby establishing the iterative relationship between $P_{i+1}$ and $P_i$. This process can be formalized as:
\begin{equation}\label{ISS3}  
\begin{split}
P_{i+1}=\mathcal{L}^{-1}_{(A,B,P_i)}\left(\Gamma(P_i)\right).
\end{split}
\end{equation}
where $\mathcal{L}^{-1}_{(\cdot)}(\cdot)$ is an operator that reconstructs the matrix $P_{i+1}$ using $(A,B)$ and $P_i$. 

{This formulation allows the sequence $\{P_i\}$ obtained from Algorithm \ref{Algo1} to be interpreted as a discrete-time dynamical system, abstracting the PI algorithm into an algorithmic dynamic and enabling the analysis of its convergence properties, which serves as the foundation for the subsequent analysis.}
To this aim, the invertibility of $\mathcal{A}(P_i)$ must be ensured. According to Theorem \ref{theorem1}, when $P_i \succeq P^*$, the invertibility of $\mathcal{A}(P_i)$ is guaranteed. This condition yields convergence of \eqref{ISS3} to $P^*$, as established in Theorem \ref{theorem1}.


Additionally, in \cite[Theorem 4]{ECC}, we explored an alternative condition that guarantees the invertibility of $\mathcal{A}(P_i)$ and ensures exponential convergence, without relying on the well-known condition $P_i \succeq P^*$, as discussed in Theorem \ref{theorem1}.
\begin{Theorem}[Exponential convergence of PI \cite{ECC}]\label{Theorem2}
There exists a constant $\delta_1 > 0$, such that for any $P_i\in \mathcal{B}_{\delta_1}(P^*)$, $\mathcal{A}(P_i)$ is invertible and the following inequality holds:
\begin{equation}
    \lvert P_{i+1}-P^* \rvert \leq \sigma \lvert P_{i}-P^* \rvert, \quad \quad\forall i \in \mathbb{Z}_+,
\end{equation}
where $\sigma \in (0,1)$. 
\end{Theorem}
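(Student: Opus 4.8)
The plan is to read the recursion \eqref{ISS3} as a fixed-point map $F(P):=\mathcal{L}^{-1}_{(A,B,P)}(\Gamma(P))$ and to show that $P^*$ is a locally exponentially stable fixed point of $F$. First I would verify that $P^*$ is indeed a fixed point and that $\mathcal{A}(P^*)$ is invertible. Since $P^*$ solves the DARE \eqref{DARE} with optimal gain $K^*$ from \eqref{Kpolicyimprovement}, a direct computation gives $\Omega(P^*)^\top = A - B\beta(P^*)^{-1}\alpha(P^*) = A+BK^*$, so $\Omega(P^*)$ is the transpose of the optimal closed-loop matrix. Because $K^*$ is stabilizing, $A+BK^*$ is Schur, hence $\rho(\Omega(P^*))<1$, and the eigenvalues of $\Omega(P^*)\otimes\Omega(P^*)$, being products $\lambda_i\lambda_j$ of eigenvalues of $\Omega(P^*)$, all lie strictly inside the unit disk. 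Consequently every eigenvalue $1-\lambda_i\lambda_j$ of $\mathcal{A}(P^*)=I_{n_x}\otimes I_{n_x}-\Omega(P^*)\otimes\Omega(P^*)$ is nonzero, so $\mathcal{A}(P^*)$ is invertible. Since $P\mapsto\mathcal{A}(P)$ is continuous and the invertible matrices form an open set, there is $\delta_1>0$ such that $\mathcal{A}(P)$ is invertible, and hence $F$ is well-defined and smooth (it is built from rational operations with nonvanishing denominators), on $\mathcal{B}_{\delta_1}(P^*)$. This settles the invertibility claim.

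For the contraction, I would linearize $F$ at $P^*$ by differentiating the implicit identity $\mathcal{A}(P)\,vec(F(P)) = vec(\Gamma(P))$. Evaluating the derivative in a symmetric direction $H$ at $P=P^*$ and using $F(P^*)=P^*$ yields $\mathcal{A}(P^*)\,vec(DF(P^*)[H]) = vec(D\Gamma(P^*)[H]) - (D\mathcal{A}(P^*)[H])\,vec(P^*)$. The crucial step is to show the right-hand side vanishes for every $H$, i.e. that $DF(P^*)=0$. This is the manifestation of the classical fact that PI for discrete-time LQR coincides with Newton's method on the Riccati operator and therefore converges quadratically; at the optimum the first-order sensitivity of the policy-evaluation value to perturbations of the greedy gain cancels exactly, by optimality of $K^*$ (an envelope-type argument). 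Establishing this cancellation, by expanding $D\mathcal{A}(P^*)[H]\,vec(P^*) = -vec\!\left(\Omega(P^*)P^*(D\Omega(P^*)[H])^\top + (D\Omega(P^*)[H])P^*\Omega(P^*)^\top\right)$ and matching it termwise against $vec(D\Gamma(P^*)[H])$ through the DARE, is the main obstacle and the only genuinely computational part of the argument.

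Finally, with $DF(P^*)=0$ and $F$ twice continuously differentiable on $\mathcal{B}_{\delta_1}(P^*)$, Taylor's theorem provides a constant $\kappa$ with $|F(P)-P^*|\leq\kappa\,|P-P^*|^2$ on a (possibly smaller) ball. Shrinking $\delta_1$ so that $\sigma:=\kappa\delta_1<1$, I obtain $|P_{i+1}-P^*| = |F(P_i)-P^*| \leq \kappa|P_i-P^*|\cdot|P_i-P^*| \leq \sigma|P_i-P^*|$ whenever $P_i\in\mathcal{B}_{\delta_1}(P^*)$. Since $\sigma<1$, this inequality also gives $|P_{i+1}-P^*|<\delta_1$, so $\mathcal{B}_{\delta_1}(P^*)$ is forward invariant under $F$; by induction the estimate then holds for all $i\in\mathbb{Z}_+$, which completes the argument. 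A fully self-contained alternative that avoids the Newton interpretation would instead subtract the two implicit relations $\mathcal{A}(P_i)vec(P_{i+1})=vec(\Gamma(P_i))$ and $\mathcal{A}(P^*)vec(P^*)=vec(\Gamma(P^*))$ and bound the resulting error recursion via Lipschitz estimates on $\Gamma$ and $\mathcal{A}$ together with the bound on $\mathcal{A}(P^*)^{-1}$; the difficulty there is identical, namely extracting a genuine contraction factor strictly below one rather than a mere Lipschitz constant.
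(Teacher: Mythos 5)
The paper itself contains no proof of Theorem \ref{Theorem2}: it is imported verbatim from the authors' earlier work \cite{ECC}, so there is nothing in this document to compare your argument against line by line. Judged on its own, your proposal follows what is essentially the classical route (Hewer's quadratic-convergence theorem: discrete-time PI is Newton's method on the Riccati operator), and it is correct. The invertibility part is complete: $\Omega(P^*)^\top=A+BK^*$ is Schur, so the eigenvalues $1-\lambda_i\lambda_j$ of $\mathcal{A}(P^*)$ are nonzero, and continuity of $P\mapsto\mathcal{A}(P)$ plus openness of the set of invertible matrices gives a ball on which $F$ is well defined and smooth. The one step you flag as "the main obstacle," namely $DF(P^*)=0$, does go through, and there is a cleaner way to see it than term-matching through $D\mathcal{A}$ and $D\Gamma$: write $F(P)=V(K(P))$, where $K(P)=-\beta(P)^{-1}\alpha(P)$ is the greedy gain and $V(K)$ is the policy-evaluation map solving $V=(A+BK)^\top V(A+BK)+Q+K^\top RK$. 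Differentiating this Lyapunov equation in a direction $\Delta K$ at $K=K^*$, $V=P^*$ produces the forcing term $\Delta K^\top\bigl(RK^*+B^\top P^*(A+BK^*)\bigr)$ plus its transpose, and $RK^*+B^\top P^*(A+BK^*)=B^\top P^*A+(R+B^\top P^*B)K^*=0$ by \eqref{Kpolicyimprovement}; hence $DV(K^*)=0$ and, by the chain rule, $DF(P^*)=0$ regardless of $DK(P^*)$. With that, your Taylor estimate $|F(P)-P^*|\leq\kappa|P-P^*|^2$, the choice $\sigma:=\kappa\delta_1<1$, and the forward-invariance of $\mathcal{B}_{\delta_1}(P^*)$ (needed to propagate the bound to all $i\in\mathbb{Z}_+$) complete the proof. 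Your argument in fact establishes the stronger local quadratic rate, of which the stated linear contraction with $\sigma\in(0,1)$ is an immediate consequence; the only shortcoming of the write-up is that the central cancellation is asserted rather than executed, and the envelope computation above is the missing two lines.
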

The advantage of Theorem \ref{Theorem2} is to guarantee the existence of a region around the optimal $P^*$ such that if $P_0$ is initialized there, the sequence $\{P_i\}$ generated by PI guarantees the invertibility of $\mathcal{A}(P_i)$. Figure \ref{fig:enter-label} illustrates the region where $P \succeq P^*$  as the shaded area, indicating where convergence is guaranteed by Theorem \ref{theorem1}. The remaining region, depicted within the circle, represents the area where convergence is ensured by Theorem \ref{Theorem2}.
\begin{figure}[H]
    \centering
    \begin{tikzpicture}[scale=0.85, transform shape]
    \draw[->] (0,0) -- (4,0) node[below] {$\lambda_1(P)$}; 
    \draw[->] (0,0) -- (0,4) node[left] {$\lambda_2(P)$};  
    \draw[thick] (2,2) -- (2,4) ;
    \draw[thick] (2,2) -- (4,2) ;
    \filldraw[gray, opacity=0.2] (2,2) rectangle (3.9,3.9);
       \draw[opacity=0.8, draw=black, thick] (2,2) circle [radius=1.5];
    \filldraw [black] (2,2) circle (2pt) node[above right] {$P^*$};
    \node at (1.9,1.1) {\textcolor{black}{\fontsize{8}{14}\selectfont $\mathcal{B}_{\delta_1}(P^*)$}};
    \node at (3,3.7) {\textcolor{black}{\fontsize{8}{14}\selectfont $P\succeq P^*$}};
\end{tikzpicture} 
    \caption{2-dimensional Graphic Representation}
    \label{fig:enter-label}
\end{figure}
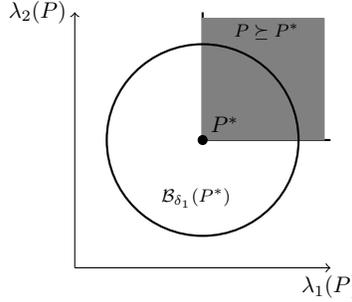
\subsection{Recursive Least Squares} 
When the system dynamics are unknown, least squares identification is a possible strategy to identify the model parameters. We can rewrite system \eqref{LTI} as: 
\begin{equation}\label{ConvergenceSystemID1}
x_{t+1}=Ax_t+Bu_t+w_t=\underbrace{\left[A~B\right]}_{=:\theta}\underbrace{\left[\begin{array}{cc}
         x_t  \\
         u_t 
    \end{array}\right]}_{=:d_t}+w_t.
\end{equation}
Given a dataset $\{d_k,x_{k+1}\}^T_{k=1}$ collected over a trajectory of length $T$, an estimate $\hat{\theta}$ of system matrix $\theta$ can be obtained by minimizing the least-squares loss function \cite{ljung1999system}:
\begin{equation}\label{Solution}
  \begin{split}
     \theta \in\mathop{\mathrm{arg~min}}\limits_{\hat{\theta}} \sum\limits_{k=1}^{T}(x_{k+1}-\hat{\theta} d_k)^\top(x_{k+1}-\hat{\theta} d_k).
  \end{split}
\end{equation}
When the matrix $H_T:=\left(\sum\limits_{k=1}^{T}d_kd_k^\top\right)$ is invertible, $\hat{\theta}$ has a closed-form solution:
\begin{equation}\label{Solution2}
  \begin{split}
     \hat{\theta}=\left(\sum\limits_{k=1}^{T}x_{k+1}d_k^\top\right)H_T^{-1}.
  \end{split}
\end{equation}

This (batch) least squares approach estimates the parameters in a single step, utilizing all data points at once. In contrast, the recursive least squares (RLS) algorithm is particularly valuable for online estimation scenarios \cite{AC}, whereby estimates are incrementally updated as new data becomes available. Defining the estimated system matrix at time $t$ as $\hat{\theta}_t:=[\hat{A}_t,\hat{B}_t]$, the RLS algorithm update equations, are given as follows and summarized in Algorithm \ref{Algo4}.

\begin{subequations}\label{rls111}
    \begin{align}
    H_{t}&=H_{t-1}+d_td_t^\top,\\
        \hat{\theta}_t&=\hat{\theta}_{t-1}+(x_{t+1}-\hat{\theta}_{t-1}d_{t})d^\top_{t}H_t^{-1}.        
    \end{align}
\end{subequations}

\begin{algorithm}[H]
  \caption{Recursive least squares.}\label{Algo4}
  \begin{algorithmic}
      \Require An initial estimate of the system dynamic $\hat{\theta}_0$ and $H_0 \succ 0$
      \For{$t=1,...,\infty$}
          \State \textbf{Given $\{x_{t+1},d_t\}$}
          \State $H_{t}=H_{t-1}+d_td_t^\top$
          \State $\hat{\theta}_t=\hat{\theta}_{t-1}+(x_{t+1}-\hat{\theta}_{t-1}d_{t})d^\top_{t}H_t^{-1}$
      \EndFor
  \end{algorithmic}
\end{algorithm}

In the context of RLS, it is essential to quantify the time-varying estimation error, denoted as $\Delta \theta_t:=\hat{\theta}_t-\theta$, which evolves as data are collected over time. This term arises due to the initial estimation error and the effect of the noise models \eqref{Instant} and \eqref{Energy}. Using Algorithm \ref{Algo4}, we can derive the recursive expression for the estimation error $\Delta \theta_t$ as follows:
\begin{equation}\label{error}
\begin{split}
    \Delta \theta_t&=\hat{\theta}_{t-1} H_{t-1}H_t^{-1}+(\theta d_t+w_t)d_t^\top H_t^{-1}-\theta H_tH_t^{-1}\\
    &=(\hat{\theta}_{t-1}-\theta) H_{t-1}H_t^{-1}+w_td_t^\top H_t^{-1}\\
    &=(\hat{\theta}_{t-2}-\theta) H_{t-2}H_t^{-1}+(w_td_t^\top+w_{t-1}d_{t-1}^\top )H_t^{-1}\\
    &=(\hat{\theta}_{0}-\theta) H_{0}H_t^{-1}+\left(\sum_{k=1}^{t}w_kd_k^\top\right) H_t^{-1}.\\
\end{split}
\end{equation}

In the derivations above, the first equality uses the RLS update equation in \eqref{rls111} and the last equality is obtained by the recursively applying \eqref{error}. In the next section, we will analyze how the estimation error behaves under the presence of adversarial noisy data. 

\section{Recursive Least Squares with Adversarial Noise Data}\label{RLSsystem}
Before analyzing the property of RLS with noisy data, we first recall a property of data sequence $\{d_t\}$, where $d_t$ is defined in \eqref{ConvergenceSystemID1}, which plays a crucial role in ensuring the convergence of the RLS estimator. This property, named local persistency, captures the excitation level of the data sequence.
\begin{Definition}{Local persistency \cite[Definition 2]{IJRNC}}\label{def2}\\
    A sequence $\{Y_t\} \in \mathbb{S}^{n}_+$ is locally persistent if there exist $N\geq 1, M\geq 1$ and $\alpha > 0$ such that , for all $j=Mk+1$ with $k\in\mathbb{Z}_+$,
    \begin{equation}\label{ConvergenceSystemid8}
        \sum\limits_{t=0}^{N-1}Y_{t+j} \succeq \alpha I_n.
    \end{equation}
    The numbers $\alpha$ and $N$ are respectively, the lower bound and persistency window of $\{Y_t\}$. $M$ is the persistency interval. A sequence $\{Y_t\} \in \mathbb{R}^{n\times m}$ is locally persistent if $\{Y_t Y_t^\top\}$ is locally persistent.
\end{Definition}
The concept of local persistency was first introduced in our previous work \cite{IJRNC} as a relaxed condition of the persistency condition from in \cite{bruce2020convergence}. Local persistency provides a sufficient condition for the convergence of the RLS algorithm with noise-free data, as demonstrated in \cite[Theorem 2]{IJRNC}. We introduce the following assumption which holds throughout the work.
\begin{Assumption}\label{asspersistentence}
    The data sequence $\{d_t\}$ is locally persistent with parameters $N=N_d,~M=M_d~\mathrm{and}~\alpha=\alpha_d$.
\end{Assumption}
Assumption \ref{asspersistentence} can be met by appropriately selecting the excitation signal $u_t$. In a later section, we will address how to design $u_t$ to satisfy this assumption. Building on this assumption, we now extend the analysis to include the convergence of RLS under the influence of adversarial noise. To facilitate this, we introduce an additional assumption regarding the data sequence:
\begin{Assumption}[Boundedness of data sequence $\{d_t\}$] \label{ass1}
    The data sequence $\{d_t\}$ satisfies: 
    \begin{equation}\label{boundD}
        \lVert d \rVert_\infty \leq \Bar{d},
    \end{equation}
    where $\Bar{d} \in (0, \infty)$ is a constant.
\end{Assumption}
Because of the boundedness of the noise sequence $\{w_t\}$, Assumption \ref{ass1} is guaranteed if we apply a stabilizing gain $K$. Having established these preliminaries, we now present the following theorem that analyzes the convergence properties of the RLS estimation error in the presence of bounded noisy data.
\begin{Theorem}\label{theoremRLS}
   If Assumptions \ref{asspersistentence} and \ref{ass1} are satisfied and the noise satisfies \eqref{Instant}, then the estimation error of RLS initialized with $\hat{\theta}_0$ and $H_0=aI(a>0)$ is bounded by:
        \begin{equation}\label{eqtheorem2}
    \lvert \hat{\theta}_t -\theta \rvert \leq \beta_\theta(\lvert \hat{\theta}_0 - \theta \rvert,t) +\gamma_\theta(\lVert w \rVert_\infty), \quad \forall t\in \mathbb{Z}_{++}
    \end{equation}
    where $\beta_\theta(\lvert \hat{\theta}_0 - \theta \rvert,t):=\frac{a(M_d+N_d)\lvert \hat{\theta}_0 - \theta \rvert}{\min(a,\alpha_d)t}$; $\gamma_\theta(x):=\bar{d}\eta x$; $\bar{d}$ is defined in \eqref{boundD}; $\eta:=\frac{(n_x+n_u)(M_d+N_d)}{\min(a,\alpha_d)}$.
\end{Theorem}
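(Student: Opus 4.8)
The plan is to start from the closed-form expression for the estimation error derived in \eqref{error}, namely $\Delta\theta_t = (\hat\theta_0-\theta)H_0H_t^{-1} + \big(\sum_{k=1}^t w_kd_k^\top\big)H_t^{-1}$, and to bound its two summands separately via the triangle inequality. The first summand encodes the forgetting of the initial condition and should produce $\beta_\theta$; the second encodes the accumulated effect of the noise and should produce $\gamma_\theta$. Both summands share the factor $H_t^{-1}$, so the entire argument reduces to (i) an upper bound on the norm of $H_t^{-1}$, and (ii) elementary bounds on the two numerators using $H_0=aI$, Assumption~\ref{ass1}, and the pointwise noise bound \eqref{Instant}.

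The crux is step (i): I would prove the linear-in-$t$ growth estimate $\lambda_{\min}(H_t) \ge \frac{\min(a,\alpha_d)}{M_d+N_d}\,t$, equivalently $\lVert H_t^{-1}\rVert \le \frac{M_d+N_d}{\min(a,\alpha_d)\,t}$ in spectral norm. Since $H_t = aI + \sum_{k=1}^t d_kd_k^\top$, this is where Assumption~\ref{asspersistentence} enters. The key combinatorial step is to extract disjoint persistency windows from $[1,t]$: partitioning $[1,t]$ into consecutive super-blocks of length $M_d+N_d$, I claim each super-block contains an admissible window $[M_dk+1,\,M_dk+N_d]$ entirely (a valid starting index $M_dk+1$ always exists inside a block of this length), and by Definition~\ref{def2} each such window contributes $\alpha_d I$ to the sum. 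Because the super-blocks are disjoint, the contributions add, giving $\sum_{k=1}^t d_kd_k^\top \succeq \lfloor t/(M_d+N_d)\rfloor\,\alpha_d I$. Adding $H_0=aI$ and handling the floor together with the leftover remainder $r<M_d+N_d$ yields the clean bound above, where the appearance of $\min(a,\alpha_d)$ is precisely what absorbs both the small-$t$ regime (in which $aI$ dominates) and the rounding loss. I expect this covering argument — making the window placement and disjointness rigorous for arbitrary $N_d,M_d$ (in particular when $N_d>M_d$, so that the raw windows overlap) — to be the main obstacle; the rest is bookkeeping.

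With the bound on $\lVert H_t^{-1}\rVert$ in hand, the two summands follow quickly. For the first, submultiplicativity gives $|(\hat\theta_0-\theta)H_0H_t^{-1}| \le |\hat\theta_0-\theta|\,\lVert H_0H_t^{-1}\rVert \le a|\hat\theta_0-\theta|\,\lVert H_t^{-1}\rVert$, and substituting the bound on $\lVert H_t^{-1}\rVert$ reproduces $\beta_\theta(\lvert\hat\theta_0-\theta\rvert,t) = \frac{a(M_d+N_d)\lvert\hat\theta_0-\theta\rvert}{\min(a,\alpha_d)t}$. For the second, the triangle inequality together with Assumption~\ref{ass1} and \eqref{Instant} give $\big|\sum_{k=1}^t w_kd_k^\top\big| \le \sum_{k=1}^t |w_k|\,|d_k| \le t\,\bar d\,\lVert w\rVert_\infty$; multiplying by $\lVert H_t^{-1}\rVert \le \frac{M_d+N_d}{\min(a,\alpha_d)t}$ cancels the factor $t$ and leaves a term proportional to $\lVert w\rVert_\infty$, i.e.\ $\gamma_\theta(\lVert w\rVert_\infty)=\bar d\,\eta\,\lVert w\rVert_\infty$, with the dimensional constant $(n_x+n_u)$ in $\eta$ entering through the conversion between the Frobenius norm used in the statement and the spectral bound on $H_t^{-1}$. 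Finally, I would note the qualitative reading of the result: $\beta_\theta$ is a $\mathcal{KL}$-type term vanishing as $t\to\infty$ (the initial error is forgotten), whereas $\gamma_\theta$ is a constant offset — persistent pointwise noise injects a fresh contribution at every step, so exact consistency cannot be recovered and only a bounded steady-state error remains.
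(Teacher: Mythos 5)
Your proposal is correct and follows essentially the same route as the paper: decompose $\Delta\theta_t$ via \eqref{error}, bound $\lambda_{\min}(H_t)$ from below using disjoint persistency windows to get $|H_t^{-1}|\leq \frac{(n_x+n_u)(M_d+N_d)}{\min(a,\alpha_d)t}$, and then bound the two summands with $\sum_{k=1}^t|w_k|\leq t\lVert w\rVert_\infty$ so that the factor $t$ cancels. The only difference is that you spell out the window-extraction/covering argument in more detail than the paper, which simply asserts $\lambda_{\min}(H_t)\geq a+\lfloor t/(\lceil N_d/M_d\rceil M_d)\rfloor\alpha_d$; your bookkeeping is consistent with that bound.
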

The proof of Theorem \ref{theoremRLS} is provided in Appendix \ref{ProofofTheorem4}. The result of Theorem \ref{theoremRLS} can be interpreted as an input-to-state stability (ISS) result \cite{Sontag2008, JIANG2001857}. The function $\beta_\theta(\cdot,\cdot)$ is a $\mathcal{KL}$ function, representing the error due to initialization $\hat{\theta}_0$, which decreases to zero as $t$ approaches infinity. The function $\gamma_\theta(\cdot)$ is a $\mathcal{K}$ function, capturing the error introduced by the noise term $w_t$. This function is non-zero unless $\lVert w \rVert_\infty=0$. Based on Theorem \ref{theoremRLS}, we can derive the following corollary, which is a standard corollary of ISS results.     
\begin{Corollary}\label{Coro2}
  Using the assumptions and notations of Theorem \ref{theoremRLS}, if $\lim\limits_{t\rightarrow \infty} \lvert w_t\rvert=0$, then we have $\lim\limits_{t\rightarrow \infty} \lvert \hat{\theta}_t -\theta \rvert=0$.
\end{Corollary}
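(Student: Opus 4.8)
The plan is to exploit the input-to-state stability structure already made explicit in Theorem \ref{theoremRLS}, together with the closed-form error decomposition \eqref{error}, and to upgrade it to the converging-input/converging-state property. The key obstruction is that the gain term $\gamma_\theta(\lVert w \rVert_\infty)$ in \eqref{eqtheorem2} depends on the \emph{supremum} of the noise, so it need not vanish merely because $w_t\to 0$; a direct application of Theorem \ref{theoremRLS} is therefore insufficient, and the argument must instead track the noise contribution as $t$ grows.

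I would start from the exact error formula \eqref{error}, namely $\Delta\theta_t=(\hat\theta_0-\theta)H_0H_t^{-1}+\big(\sum_{k=1}^t w_kd_k^\top\big)H_t^{-1}$. The first point to record is that under Assumption \ref{asspersistentence} the information matrix grows unboundedly, and in fact $\lvert H_t^{-1}\rvert = O(1/t)$ --- this is precisely the mechanism producing the $1/t$ decay of $\beta_\theta$ in Theorem \ref{theoremRLS}. Consequently the initialization term $(\hat\theta_0-\theta)H_0H_t^{-1}$ vanishes as $t\to\infty$, which is just the statement $\beta_\theta(\lvert\hat\theta_0-\theta\rvert,t)\to 0$.

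The crux is the noise term. I would fix an arbitrary $\epsilon>0$; since $\lvert w_t\rvert\to 0$ there is a $T$ with $\lvert w_k\rvert<\epsilon$ for all $k>T$. Then I would split $\sum_{k=1}^t w_kd_k^\top=\sum_{k=1}^{T}w_kd_k^\top+\sum_{k=T+1}^{t}w_kd_k^\top$. The first, finite sum is a fixed matrix (of norm at most $T\bar d\lVert w\rVert_\infty$), so after multiplication by $H_t^{-1}\to 0$ it tends to zero as $t\to\infty$. For the tail, the very same estimate used to establish the gain in Theorem \ref{theoremRLS} applies with $\lVert w\rVert_\infty$ replaced by $\epsilon$: bounding $\lvert\sum_{k=T+1}^t w_kd_k^\top\rvert\le (t-T)\,\epsilon\,\bar d\le t\,\epsilon\,\bar d$ and using $\lvert H_t^{-1}\rvert=O(1/t)$ gives a bound of the form $\gamma_\theta(\epsilon)$, independent of $T$. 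Combining the three pieces yields $\limsup_{t\to\infty}\lvert\hat\theta_t-\theta\rvert\le\gamma_\theta(\epsilon)$.

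Finally, since $\epsilon>0$ was arbitrary and $\gamma_\theta$ is a class-$\mathcal{K}$ function, hence continuous with $\gamma_\theta(0)=0$, letting $\epsilon\downarrow 0$ forces $\limsup_{t\to\infty}\lvert\hat\theta_t-\theta\rvert=0$, i.e.\ $\lim_{t\to\infty}\lvert\hat\theta_t-\theta\rvert=0$. I expect the main technical care to lie in verifying that the tail estimate reproduces the $\gamma_\theta(\epsilon)$ bound \emph{uniformly in} $T$: this works precisely because the normalizing factor $H_t^{-1}$ is $O(1/t)$ while the tail sum grows at most linearly in $t$, so the arbitrary but fixed cutoff $T$ does not enter the limiting constant.
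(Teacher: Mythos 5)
Your proof is correct and follows essentially the route the paper intends: the paper omits the argument and defers to the standard ISS ``converging-input implies converging-state'' corollary in \cite[Appendix D3]{IJRNC}, and your head/tail split of $\sum_{k=1}^{t}w_kd_k^\top$ combined with $\lvert H_t^{-1}\rvert = O(1/t)$ is precisely the standard instantiation of that argument, including the correct observation that the tail bound $\gamma_\theta(\epsilon)$ is uniform in the cutoff $T$. Your opening remark that Theorem \ref{theoremRLS} cannot be applied verbatim (since $\gamma_\theta(\lVert w\rVert_\infty)$ does not vanish when $w_t\to 0$) is exactly the right diagnosis of why a separate argument is needed.
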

The proof of Corollary \ref{Coro2} closely follows the steps outlined in \cite[Appendix D3]{IJRNC} and is omitted here. As discussed in Section \ref{sec:PSP}, the energy-bounded noise condition in \eqref{Energy} represents a particular case of \eqref{Instant}, where $\lim\limits_{t\rightarrow \infty} \lvert w_t\rvert=0$. Thus, Theorem \ref{theoremRLS} and Corollary \ref{Coro2} are applicable. However, by directly using \eqref{Energy}, a stronger result than those provided in \eqref{eqtheorem2} and Corollary \ref{Coro2} can be achieved.
\begin{Corollary}[RLS with energy bounded noisy data]\label{Coro1}
Using the assumptions and notations of Theorem \ref{theoremRLS}, if the noise is energy bounded, i.e. satisfying \eqref{Energy}, the estimation error of RLS is bounded by:
\begin{equation}
            \lvert \hat{\theta}_t -\theta \rvert \leq \beta_\theta(\lvert \hat{\theta}_0 - \theta \rvert,t) +\beta_{e}(\lVert w \rVert_2,\sqrt{t}),~\forall t\in \mathbb{Z}_{++},
\end{equation}
with $\beta_e(\lVert w \rVert_2,\sqrt{t}):=\bar{d}\eta\frac{\lVert w \rVert_2}{\sqrt{t}}$.
\end{Corollary}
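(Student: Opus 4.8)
The plan is to reuse the exact error decomposition \eqref{error} and the lower bound on $H_t$ furnished by local persistency (Assumptions \ref{asspersistentence}--\ref{ass1}), precisely as in the proof of Theorem \ref{theoremRLS}, and to change only the estimate of the noise term so as to exploit the summability built into \eqref{Energy}. From \eqref{error} I write $\hat\theta_t-\theta=(\hat\theta_0-\theta)H_0H_t^{-1}+\left(\sum_{k=1}^{t}w_kd_k^\top\right)H_t^{-1}$ and bound the two summands separately by the triangle inequality. The first (initialization) summand is identical to the one handled in Theorem \ref{theoremRLS}: with $H_0=aI$ and the window-counting bound $|H_t^{-1}|\le (M_d+N_d)/(\min(a,\alpha_d)\,t)$ that local persistency supplies, it reproduces the $\mathcal{KL}$ term $\beta_\theta(|\hat\theta_0-\theta|,t)$ verbatim, so no new work is needed there.

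The essential point is the noise summand. In Theorem \ref{theoremRLS} the sum was estimated pointwise, $\sum_{k=1}^{t}|w_k||d_k|\le \bar d\,t\,\lVert w\rVert_\infty$, which after multiplication by $|H_t^{-1}|\sim 1/t$ leaves the constant floor $\bar d\eta\lVert w\rVert_\infty$. To make the term vanish I would instead apply Cauchy--Schwarz to the sequences $\{|w_k|\}$ and $\{|d_k|\}$, giving $\big|\sum_{k=1}^{t}w_kd_k^\top\big|\le\big(\sum_{k=1}^{t}|w_k|^2\big)^{1/2}\big(\sum_{k=1}^{t}|d_k|^2\big)^{1/2}$. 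Assumption \ref{ass1} bounds the data factor by $\bar d\sqrt t$, while \eqref{Energy} bounds the noise factor by $\lVert w\rVert_2$ (indeed $\sum_{k=1}^{t}|w_k|^2\le(\sum_{k=1}^{\infty}|w_k|)^2=\lVert w\rVert_2^2$). Multiplying by $|H_t^{-1}|\le (M_d+N_d)/(\min(a,\alpha_d)\,t)$ produces a noise contribution of order $\frac{(M_d+N_d)\bar d}{\min(a,\alpha_d)}\frac{\lVert w\rVert_2}{\sqrt t}$, which is exactly $\beta_e(\lVert w\rVert_2,\sqrt t)$ after the Frobenius-versus-spectral norm conversion that accounts for the factor $(n_x+n_u)$ inside $\eta$.

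The whole improvement, from the constant floor $\bar d\eta\lVert w\rVert_\infty$ to the decaying $\bar d\eta\lVert w\rVert_2/\sqrt t$, stems from this single trade-off: Cauchy--Schwarz turns the $t$ accumulated samples of bounded data energy into only a $\sqrt t$ factor, which the $1/t$ decay of $|H_t^{-1}|$ overwhelms. The step I expect to demand the most care is reproducing the precise constant $\eta$: one must carry over the identical window-counting lower bound on $H_t$ and the same norm conversions used for Theorem \ref{theoremRLS}, and confirm validity for every $t\in\mathbb{Z}_{++}$, including the initial phase before a full persistency window has elapsed, where $H_t\succeq aI$ rather than the $t$-growing estimate controls $|H_t^{-1}|$. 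Combining the two summands by the triangle inequality then yields the claimed bound $\beta_\theta(|\hat\theta_0-\theta|,t)+\beta_e(\lVert w\rVert_2,\sqrt t)$.
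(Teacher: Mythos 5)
Your proposal is correct and takes essentially the same route as the paper: both reuse the decomposition and the $|H_t^{-1}|\le \eta/((n_x+n_u)^{-1}(n_x+n_u)\,t)$-type bound from Theorem \ref{theoremRLS} unchanged, and both replace the pointwise estimate of the noise sum by a Cauchy--Schwarz (AM--GM) step that yields the $\sqrt{t}\,\lVert w\rVert_2$ factor. The only cosmetic difference is that the paper first bounds $|d_k|\le\bar d$ and applies Cauchy--Schwarz to $\sum_{k=1}^t|w_k|$ against the all-ones sequence, whereas you apply it directly to $\{|w_k|\}$ and $\{|d_k|\}$; the resulting constant is identical.
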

The proof of Corollary \ref{Coro1} is provided in Appendix \ref{ProofofCorollary2}. According to the corollary, the estimation error is bounded by two $\mathcal{KL}$-function. As $t$ approaches infinity, the estimation error converges to zero, which recovers with the result in Corollary \ref{Coro2}.

The analysis in this section provides analytical insight into the role of noise in RLS, illustrating how noise affects estimation accuracy and convergence. These results can be integrated with robust control techniques to guarantee the performance of indirect data-driven control employing online RLS algorithms.

Before concluding our discussion on RLS, we quantify the maximum estimation error of RLS for point-wise bounded noise, which can be derived from Theorem \ref{theoremRLS} as:
\begin{equation}\label{max1}
    \overline{\Delta\theta}(\hat{\theta}_0,\bar{d}):=\max\{\lvert \hat{\theta}_0-\theta\rvert, \beta_\theta(\lvert \hat{\theta}_0 - \theta \rvert,1) +\bar{d}\eta\lVert w \rVert_\infty\}. 
\end{equation}
The first term in \eqref{max1} represents the estimation error determined by the initialization at $t=0$, and the second term is the upper bound provided by Theorem \ref{theoremRLS} for $t\geq 1$.
This quantity can be interpreted as the largest estimation error for all $t\in \mathbb{Z}_+$, i.e. $|\Delta \theta_t|\leq \overline{\Delta\theta}(\hat{\theta}_0,\bar{d})$, and it is determined by the initialization $\hat{\theta}_0$ and the upper bound on the data sequence $\bar{d}$ defined in \eqref{boundD}. Similarly, for the energy bounded noise satisfying \eqref{Energy}, we can derive the maximum estimation error from Corollary \ref{Coro2} as:
\begin{equation}\label{max2}
    \overline{\Delta\theta}_e(\hat{\theta}_0,\bar{d}):=\max\{\lvert \hat{\theta}_0-\theta\rvert, \beta_\theta(\lvert \hat{\theta}_0 - \theta \rvert,1) +\bar{d}\eta\lVert w \rVert_2\}. 
\end{equation}
\section{Online Identification-based Policy Iteration}\label{sec:main}
In this section, we analyze the online identification-based policy iteration (ORLS+PI), which integrates the model-based PI from Algorithm \ref{Algo1} with the RLS algorithm presented in Algorithm \ref{Algo4}. This approach offers a practical solution for performing policy iteration in scenarios where the system dynamics are unknown. By concurrently optimizing the policy and conducting online system identification, the algorithm aims to improve the control performance iteratively. Our primary focus is to investigate the convergence properties and limitations of this combined approach from a system-theoretic perspective and its robustness to noise. 
\subsection{Algorithm Definition}
For the ORLS+PI algorithm, we collect the data sequence $\{d_t\}$ online with the control input $u_t$ given as:
\begin{equation}\label{2024ut}
    u_t=\hat{K}_tx_t+e_t,
\end{equation}
where $\hat{K}_t$ is the feedback gain and $e_t$ is a potentially non-zero feedforward term.
\begin{Remark}[Remark on $\hat{K}_t$]\label{RemarkK}
    The gain $\hat{K}_t$ in \eqref{2024ut} is referred to the on-policy gain \cite{10383604}, meaning that data are generated using the policy currently being updated. In this case, the $\hat{K}_t$ is generated by ORLS+PI algorithm. However, as discussed in \cite[Section 5.4]{IJRNC}, one advantage of indirect data-driven policy iteration is that the excitation can be also performed off-policy. i.e. the data can be generated using a different stabilizing policy $K$, that is not updated by the algorithmic dynamics.  
\end{Remark}
\begin{Remark}[Remark on $e_t$]
    The term $e_t$ represents an additional degree of freedom of the online policy, which can be used, for example, as an exploratory signal that explores the system in a random or targeted way \cite{8732482,9099297}. The purpose of including $e_t$ is to ensure the local persistency of the data sequence $\{d_t\}$, i.e. Assumption \ref{asspersistentence}. However, it is important to note that the subsequent analysis is agnostic to the specific choice of $e_t$.
    In this work, we assume that the sequence of the signal $\{e_t\}$ is bounded, i.e.
          \begin{equation}\label{boundede}
       \lVert e \rVert_\infty \leq \bar{e}.
    \end{equation}
    where $\bar{e}\in (0,\infty)$ is a constant that represents the upper bound of the signal magnitude at each timestep. 
\end{Remark}
The ORLS+PI algorithm involves at each iteration $t$ the following steps: 
\begin{itemize}
  \item Given a policy gain $\hat{K}_{t}$, which either originates from the initialization $(t=1)$ or the previous timestep, the cost function kernel estimate $\hat{P}_t$ is computed by solving the model-based Bellman equation \eqref{MBBE} using the current system estimates $\left(\hat{A}_{t-1},\hat{B}_{t-1}\right)$:
      \begin{equation}\label{IPI2a}
      \begin{split}
            \hat{P}_{t}=Q+\hat{K}_t^\top R\hat{K}_{t}&+\\
                  \left(\hat{A}_{t-1}+\hat{B}_{t-1}\hat{K}_{t}\right)^\top&\hat{P}_{t}\left(\hat{A}_{t-1}+\hat{B}_{t-1}\hat{K}_{t}\right).
      \end{split}
      \end{equation}
      
  \item  The physical system is excited with the control input $u_t$ introduced in \eqref{2024ut}. The state-input data $\{x_t,u_t,x_{t+1}\}$ is then used to recursively update the system dynamics estimates $\left(\hat{A}_t,\hat{B}_t\right)$ using the RLS Algorithm:
      \begin{subequations}\label{procedure}
        \begin{align}
  H_{t} &=H_{t-1}+d_td_t^\top,\label{IPI2b} \\
  \hat{\theta}_t &= \left(\hat{\theta}_{t-1}H_{t-1}+ x_{t+1}d_t^\top \right)H_{t} ^{-1}.\label{IPI2c}
         \end{align}
      \end{subequations}
  \item Using the updated estimates $\left(\hat{A}_t,\hat{B}_t\right)$, the policy is improved by solving for the new feedback gain $\hat{K}_{t+1}$:
  \begin{equation}\label{IPI2d}
    \hat{K}_{t+1}=-\left(R+\hat{B}_{t}^\top\hat{P}_{t}\hat{B}_{t}\right)^{-1}\hat{B}_{t}^\top\hat{P}_{t}\hat{A}_{t}.
  \end{equation}
\end{itemize}
To ensure the feasibility of the ORLS+PI algorithm, particularly regarding equations \eqref{IPI2a} and \eqref{IPI2d}, we will provide a detailed discussion on this topic in a later section. 

\begin{Remark}[Timestep $t$]
In this work, we use a single index $t$ for both the RLS estimate update and the PI policy update. While, in principle, each update could be tracked by its own independent index. The analysis in this section can be extended to handle different timescales for each update, following the approach outlined in \cite{IJRNC}.
\end{Remark}
The ORLS+PI algorithm is summarized in Algorithm \ref{Algo2} and is depicted in Figure \ref{fig:stucture} through a block diagram that emphasizes the dynamic viewpoint leveraged in this work. The closed-loop system, consisting of the physical system and the controller, is connected by the solid black lines in the figure and is subject to the exogenous noise term $w_t$. The algorithmic dynamics, formed by the PI and RLS algorithms, is placed inside the bottom shaded area and its interconnections are depicted by the dashed black lines.
\begin{algorithm}[H]
  \caption{ORLS+PI Algorithm}\label{Algo2}
  \begin{algorithmic}
      \Require $\hat{A}_0,\hat{B}_0,H_0$, the initial optimal policy gain $\hat{K}_1$ for system $(\hat{A}_0,\hat{B}_0)$
      \For{$t=1,...,\infty$}
        \State \textbf{Policy Evaluation: find $\hat{P}_{t}$ by \eqref{IPI2a}} 
        \State \State \textbf{Excite the system with $u_t=\hat{K}_tx_{t}+e_{t}$}
        \State \State \textbf{Collect the data $\gets (x_{t},u_{t},x_{t+1})$}
        \State \State \textbf{Use RLS in Algorithm \ref{Algo4} to update $\hat{A}_{t},\hat{B}_{t}$}  
        \State \textbf{Policy Improvement: update gain $\hat{K}_{t}$ by \eqref{IPI2d}}
        \State \EndFor
  \end{algorithmic}
\end{algorithm}
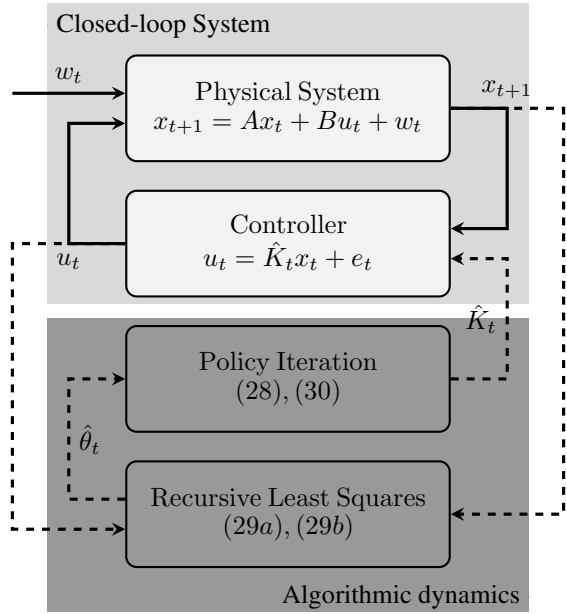
\begin{figure}[H]
    \centering
    \begin{tikzpicture}[auto, node distance=1.8cm]
        \tikzstyle{block} = [rectangle, thick, draw=black!100, minimum width=4.3cm, minimum height=1.5cm,fill=gray!10, text centered, rounded corners, minimum height=4em]
        \tikzstyle{blockblue} = [rectangle, thick, draw=black!100, minimum width=4.3cm, minimum height=1.5cm, text centered, rounded corners, minimum height=4em]
        \tikzstyle{line} = [draw]
        \tikzstyle{arrowblack}=[->, >=stealth, very thick, black]
        \tikzstyle{arrowblue}=[->, dashed,>=stealth, very thick, black]

        \fill[gray!30, opacity=0.4] (-3.2,-2.6) rectangle (3.2,1.4);

        \fill[gray!80, opacity=0.5] (-3.2,-6.7) rectangle (3.2,-2.8);
        \node [block] (block1) {$\begin{array}{cc}\mathrm{Physical~System}\\ x_{t+1}=Ax_t+Bu_t+w_t \end{array}$};
        \node [block, below of=block1] (block2) {$\begin{array}{cc}\mathrm{Controller}\\ u_{t}=\hat{K}_tx_t+e_t \end{array}$};
        \node [blockblue, below of=block2] (block3) {$\begin{array}{cc}\mathrm{Policy~Iteration}\\ (28),(30) \end{array}$};
        \node [blockblue, below of=block3] (block4) {$\begin{array}{cc}\mathrm{Recursive~Least~Squares}\\ (29a),(29b) \end{array}$};
        \filldraw[black] (-3.2, 1.1) circle (0pt) node[below, right]{Closed-loop System};
        \filldraw[black] (3.2, -6.5) circle (0pt) node[above,left]{Algorithmic dynamics};
        \draw[arrowblack] ($(block1.west)+(-1.5,+0.2)$)--node[above,black]{$w_t$}  ($(block1.west)+(0,+0.2)$);
        \draw[arrowblack] (block1.east) -- ($(block1.east)+(0.75,0)$) -- ($(block2.east)+(0.75,0.2)$) -- ($(block2.east)+(0,0.2)$);
        \draw[arrowblack] (block2.west) -- ($(block2.west)+(-0.75,0)$) -- ($(block1.west)+(-0.75,-0.2)$) -- ($(block1.west)+(0,-0.2)$);
        \draw[arrowblue] (block1.east) --node[above,black]{$x_{t+1}$} ($(block1.east)+(1.5,0)$) -- ($(block4.east)+(1.5,0)$) -- (block4.east);
        \draw[arrowblue] ($(block4.west)+(0,0.2)$) -- ($(block4.west)+(-0.75,0.2)$) -- node[right,black]{$\hat{\theta}_t$}($(block3.west)+(-0.75,0)$) -- (block3.west);
        \draw[arrowblue] (block3.east) -- ($(block3.east)+(0.75,0)$) --node[left,black]{$\hat{K}_t$} ($(block2.east)+(0.75,-0.2)$) -- ($(block2.east)+(0,-0.2)$);
        \draw[arrowblue] (block2.west) --node[below,black]{$u_{t}$} ($(block2.west)+(-1.5,0)$) -- ($(block4.west)+(-1.5,-0.2)$) -- ($(block4.west)+(0,-0.2)$);
    \end{tikzpicture}
    \caption{Concurrent identification and policy iteration scheme}
    \label{fig:stucture}
\end{figure}

\subsection{Convergence Analysis of ORLS+PI Algorithm}
As illustrated in Figure \ref{fig:stucture}, the dynamics of the policy iteration (PI) and recursive least-squares (RLS) can be analyzed as a feedback interconnection of two coupled dynamical systems. In the "system PI", the inputs are the estimates $\left(\hat{A}_t,\hat{B}_t\right)$ obtained from the RLS, and the dynamics are described by \eqref{IPI2d} and \eqref{IPI2a}. In the "system RLS", the inputs are the data $\{d_t\}$ and $\{x_{t+1}\}$ collected online from the physical system and perturbed by the noise, with the dynamics described by \eqref{IPI2b} and \eqref{IPI2c}.

The properties of "system PI" were recalled in Section \ref{2024secPI} and the properties of "system RLS" were investigated in Section \ref{RLSsystem}, which provides insight into the behavior of the RLS algorithm under adversarial noise conditions. To facilitate our analysis, we introduce the following notations: 
\begin{subequations}\label{29102024}
\begin{align}
\hat{\alpha}_t&:=\hat{B}_t^\top \hat{P}_t\hat{A}_t  \\
\hat{\beta}_t&=\hat{\beta}_t^\top:=R+\hat{B}_t^\top \hat{P}_t\hat{B}_t.
\end{align}
\end{subequations}
Before stating the main result, we introduce the following assumption.

\begin{Assumption}\label{assumption 2}
   The estimates $\left(\hat{A}_t,\hat{B}_t\right)$ obtained from RLS are stabilizable $\forall t\in \mathbb{Z}_+$. Given a stabilizable estimate $\left(\hat{A}_t,\hat{B}_t\right)$, we assume that $\hat{P}_t \succeq P^*_{\left(\hat{A}_t,\hat{B}_t\right)}$ $\forall t\in \mathbb{Z}_+$, where $\hat{P}_t$ is obtained via \eqref{IPI2a} and $P^*_{\left(\hat{A}_t,\hat{B}_t\right)}$ is the quadratic kernel of the value function associated with $\left(\hat{A}_t,\hat{B}_t\right)$ and is calculated by solving \eqref{DARE}.
\end{Assumption}
Assumption \ref{assumption 2} is a direct translation in the online identification-based setting of the standard requirement for the formulation of policy iteration, (cf. Theorem \ref{theorem1}). For further discussions and details on how to realize this assumption, we refer to \cite[Assumption 1, Assumption 2]{IJRNC}. We are finally ready to state the main convergence and robustness result of Algorithm \ref{Algo2}.
\begin{Theorem}[ORLS+PI Analysis $1$]\label{Theorem5}
    If Assumption \ref{assumption 2} is satisfied, then the ORLS+PI system formulated by \eqref{IPI2a}-\eqref{IPI2d} admits the following equivalent dynamical system representation: 
  \begin{subequations}\label{IPI23}
  \begin{align}
  \begin{split}\label{IPI213c}
      \hat{\theta}_{t+1} &= \left(\hat{\theta}_{t}\left(H_0+\sum\limits_{k=1}^{t}d_kd_k^\top \right)\right.\\
      &\left.\quad \quad+ \sum\limits_{k=1}^{t}x_{t+1}d_t^\top \right)\left(H_0+\sum\limits_{k=1}^{t}d_kd_k^\top \right) ^{-1},
  \end{split}  \\ 
  \begin{split}\label{IPI213d} \hat{P}_{t+1}&=\mathcal{L}^{-1}_{\left(\hat{A}_t,\hat{B}_t,\hat{P}_t\right)}\left(Q+\hat{\alpha}_t^\top\hat{\beta}_t^{-1}R\hat{\beta}_t^{-1}\hat{\alpha}_t\right).
  \end{split}
  \end{align}
\end{subequations}
Additionally, if Assumptions \ref{asspersistentence} and \ref{ass1} are satisfied and the noise satisfies \eqref{Instant}, then with the initialization $H_0=aI(a>0)$ and arbitrary $\hat{\theta}_0$, the estimates $\hat{P}_t$ and $\hat{\theta}_t$ satisfy the following relationships for all $t\in \mathbb{Z}_{++}$: 
  \begin{subequations}\label{ISSCA_TOT1}
  \begin{align}
   \left \lvert \hat{P}_{t}- P^*\right \rvert &\leq \beta_c\left(\left \lvert \hat{P}_{0}- P^*\right \rvert,t\right)+\gamma_c\left(\left \lVert \Delta\theta \right \rVert_{\infty}\right),\label{ISSCA12}\\
    \lvert \hat{\theta}_t -\theta \rvert &\leq \beta_\theta(\lvert \hat{\theta}_0 - \theta \rvert,t) +\gamma_\theta(\lVert w \rVert_\infty),\label{ISSCA212}
  \end{align}
\end{subequations}
where
\begin{itemize}
    \item $\beta_c\left(\cdot, \cdot \right):=c^t\left \lvert \hat{P}_{0}- P^*\right \rvert$ is a $\mathcal{KL}$-function with $c \in (0,1)$ defined in Theorem \ref{theorem1};
    \item $\gamma_c\left(\left \lVert \cdot\right \rVert_{\infty}\right):=\frac{\bar{C}}{1-c}\left \lVert \cdot\right \rVert_{\infty}$ is a $\mathcal{K}$-function with constant $\bar{C}$ given in the proof \eqref{14112024};
    \item $\beta_\theta(\cdot,\cdot)$ and $\gamma_\theta(\cdot)$ are defined in Theorem \ref{theoremRLS}.
\end{itemize}
\end{Theorem}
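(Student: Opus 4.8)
The plan is to dispatch the two easy claims first and concentrate on the bound \eqref{ISSCA12}. The representation \eqref{IPI213c} follows by unrolling the RLS recursion \eqref{IPI2b}--\eqref{IPI2c} exactly as in the derivation of \eqref{error}: since $H_t = H_0 + \sum_{k=1}^{t} d_k d_k^\top$, the telescoping of the $H$-factors collapses the recursion into the regularized batch least-squares form. The representation \eqref{IPI213d} is obtained by replaying the reformulation of Section \ref{2024secPI} that takes \eqref{2024PE}--\eqref{2024PI} into \eqref{ISS3}, but with the nominal pair $(A,B)$ replaced by $(\hat A_t,\hat B_t)$ and the shorthands \eqref{29102024}; the only delicate point is the invertibility of the corresponding operator, which is guaranteed by $\hat P_t \succeq P^*_{(\hat A_t,\hat B_t)}$ in Assumption \ref{assumption 2} through Theorem \ref{theorem1}.

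The estimate bound \eqref{ISSCA212} is then immediate, since the RLS block embedded in Algorithm \ref{Algo2} coincides with Algorithm \ref{Algo4}: under Assumptions \ref{asspersistentence}--\ref{ass1}, pointwise noise \eqref{Instant}, and $H_0 = aI$, Theorem \ref{theoremRLS} applies verbatim and delivers \eqref{ISSCA212} with the stated $\beta_\theta$ and $\gamma_\theta$.

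For the main claim \eqref{ISSCA12} I would set up a one-step perturbation recursion for $e_t := |\hat P_t - P^*|$. At each $t$, $\hat P_{t+1}$ is a single exact PI step for the \emph{estimated} model $(\hat A_t,\hat B_t)$ applied to $\hat P_t$; since Assumption \ref{assumption 2} provides $\hat P_t \succeq P^*_{(\hat A_t,\hat B_t)}$, Theorem \ref{theorem1} applied to the estimated model yields the contraction $|\hat P_{t+1} - P^*_{(\hat A_t,\hat B_t)}| \le c\,|\hat P_t - P^*_{(\hat A_t,\hat B_t)}|$. Bridging the estimated optimum to the true optimum by the triangle inequality gives
\begin{equation*}
\left\lvert \hat P_{t+1} - P^* \right\rvert \le c\left\lvert \hat P_t - P^* \right\rvert + (1+c)\left\lvert P^*_{(\hat A_t,\hat B_t)} - P^* \right\rvert ,
\end{equation*}
and the last term is controlled by Lipschitz continuity of the DARE solution in the model, $|P^*_{(\hat A_t,\hat B_t)} - P^*| \le L\,|\Delta\theta_t|$. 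With $\bar C := (1+c)L$ this is the scalar recursion $e_{t+1} \le c\,e_t + \bar C\,|\Delta\theta_t|$; unrolling with $c \in (0,1)$ and bounding $|\Delta\theta_k| \le \lVert \Delta\theta \rVert_\infty$ yields $e_t \le c^t e_0 + \frac{\bar C}{1-c}\lVert \Delta\theta \rVert_\infty$, which is precisely \eqref{ISSCA12}.

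The hard part is not this algebra but making the constants uniform in $t$. Both the contraction factor $c$ and the Riccati Lipschitz constant $L$ are a priori model-dependent, so to extract a single $c<1$ and a single $\bar C$ I must confine the estimates $(\hat A_t,\hat B_t)$ to a fixed compact neighborhood of $(A,B)$ on which the estimated DARE solution exists, $\hat\beta_t \succ 0$ is uniformly bounded away from singularity, and the PI operator is uniformly contractive and Lipschitz. I would secure this by combining the stabilizability and ordering guarantees of Assumption \ref{assumption 2} with the a priori model-error bound $\overline{\Delta\theta}(\hat\theta_0,\bar d)$ from \eqref{max1}, which pins $(\hat A_t,\hat B_t)$ inside such a neighborhood for all $t$; on this invariant set the suprema of $c$ and of the Lipschitz constants are attained and finite, closing the argument. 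This boundedness-and-invariance step, rather than the recursion itself, is where the real work lies.
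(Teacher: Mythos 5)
Your handling of the representation \eqref{IPI23} and of \eqref{ISSCA212} matches the paper exactly (the latter is indeed a verbatim application of Theorem \ref{theoremRLS}), but your route to \eqref{ISSCA12} is genuinely different from the paper's. You view $\hat{P}_{t+1}$ as an exact PI step for the \emph{estimated} model, contract toward the estimated optimum $P^*_{(\hat{A}_t,\hat{B}_t)}$, and then bridge to $P^*$ via a Lipschitz bound $|P^*_{(\hat{A}_t,\hat{B}_t)}-P^*|\leq L|\Delta\theta_t|$ on the DARE solution, arriving at $e_{t+1}\leq c\,e_t+(1+c)L|\Delta\theta_t|$. The paper instead writes $\hat{P}_{t+1}=\mathcal{L}^{-1}_{(A,B,\hat{P}_t)}(\Gamma(\hat{P}_t))+\varepsilon(\Delta A_t,\Delta B_t)$, i.e.\ an exact PI step for the \emph{true} model plus an additive operator-perturbation term, bounds $|\varepsilon|\leq\bar{C}|\Delta\theta_t|$ by the arguments of \cite[Appendix D6]{IJRNC}, and contracts with the single factor $c$ of Theorem \ref{theorem1} associated with the fixed true plant. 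The two decompositions produce the same scalar recursion and the same geometric unrolling, but they buy different things: the paper's version keeps both $c$ and $\bar{C}$ tied to the fixed data $(A,B,Q,R)$ (no time-varying model enters the contraction), at the price of having to bound the difference of two $\mathcal{L}^{-1}$ operators; your version avoids that operator comparison but pushes all the difficulty into uniformity of $c$ and $L$ over the time-varying estimates, which you correctly identify as the real work.

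On that last point your argument is thinner than you suggest. Confining $(\hat{A}_t,\hat{B}_t)$ to the ball of radius $\overline{\Delta\theta}(\hat{\theta}_0,\bar{d})$ does give a compact set, but Theorem \ref{Theorem5} permits arbitrary $\hat{\theta}_0$ and arbitrary $L_\infty$, so this ball may contain non-stabilizable pairs; Assumption \ref{assumption 2} only asserts stabilizability of the actual iterates, not of every pair in the ball, so the suprema of $L$ and of the per-model contraction factor over that set need not be finite or below one. You would need either to shrink to the set actually visited by the iterates and argue it stays uniformly inside the stabilizable region, or to adopt the paper's device of perturbing around the true-model PI operator so that only one contraction constant is ever needed. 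Since the paper itself defers the quantitative bound on $\bar{C}$ to \cite[Appendix D6]{IJRNC}, your proposal is at a comparable level of completeness, but the uniformity step is the one place where your route requires an argument the paper's route does not.
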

The proof of Theorem \ref{Theorem5} is provided in Appendix \ref{ProofofTheorem5} and is the result of combining Theorem \ref{theoremRLS} with \cite[Theorem 6]{IJRNC}.

We observe here that, regarding Assumption \ref{ass1}, there is no guarantee that the stabilizing property of $\hat{K}_t$ will hold for the true system $(A,B)$. In the on-policy setting (see Remark \ref{RemarkK}), we cannot ensure the boundedness of the data sequence. However, as discussed in Remark \ref{RemarkK}, the excitation can be performed off-policy. 
Specifically, all the analyses still hold if the system is excited using a fixed pre-stabilizing gain $K$. In this off-policy case, Assumption \ref{ass1} can be guaranteed.

Theorem \ref{Theorem5} describes the convergence properties of the ORLS+PI algorithm for arbitrary initial $\hat{\theta}_0$. If an assumption on the maximum estimation error \eqref{max1}, which also depends on $\hat{\theta}_0$ is made, then Assumptions \ref{ass1} and \ref{assumption 2} are not anymore required.
\begin{Assumption}\label{assumption3}
    The maximum estimation error of RLS satisfies the following condition:
    \begin{equation}\label{condition}
        \overline{\Delta\theta} (\hat{\theta}_0,\bar{D})\leq \min \{\bar{a}_p,\bar{b}_p\},
    \end{equation}
    where $\bar{a}_p$ and $\bar{b}_p$ are constants defined in \eqref{251020242} (see Theorem \ref{TheoremRPI} in Appendix \ref{ProofofTheorem6}) and $\bar{D}$ is defined in \eqref{definitionmathcalD} (see Lemma \ref{boundD} in Appendix \ref{ProofofTheorem6}).
\end{Assumption}
The value of $\bar{D}$ is quantitatively determined by both the upper bound of the noise and the sequence $\{\hat{K}_t\}$ applied to the system. Assumption \ref{assumption3} requires that the maximum estimation error from RLS remains within acceptable limits. This can be used in conjunction with recent findings on the inherent robustness of PI with inexact models \cite{ECC} to show that Algorithm \ref{Algo4} converges under different assumptions than Theorem 4. Under Assumption \ref{assumption3}, we can derive the following theorem.
\begin{Theorem}[ORLS+PI Analysis $2$]\label{theorem6}
If Assumption \ref{assumption3} is satisfied {and the initial $\hat{K}_0$ is selected as the optimal gain calculated by solving \eqref{Kpolicyimprovement_EQ} using $(\hat{A}_0,\hat{B}_0,Q,R)$}, then the ORLS+PI algorithm formulated by \eqref{IPI2a}-\eqref{IPI2d} admits the equivalent dynamical system representation in \eqref{IPI23}. Additionally, if Assumption \ref{asspersistentence} is satisfied and the noise satisfies \eqref{Instant}, then with the initialization $H_0=aI(a>0)$ and an initial $\hat{\theta}_0$ satisfying Assumption \ref{assumption3}, the estimates $\hat{P}_t$ and $\hat{\theta}_t$ satisfy the following relationships for all $t\in \mathbb{Z}_{++}$: 
  \begin{subequations}\label{ISSCA_TOT}
  \begin{align}
   \left \lvert \hat{P}_{t}- P^*\right \rvert &\leq \beta_{\sigma}\left(\left \lvert \hat{P}_{0}- P^*\right \rvert,t\right)+\gamma_{\sigma}\left(\left \lVert \Delta\theta \right \rVert_{\infty}\right),\label{ISSCA}\\
    \lvert \hat{\theta}_t -\theta \rvert &\leq \beta_\theta(\lvert \hat{\theta}_0 - \theta \rvert,t) +\gamma_D(\lVert w \rVert_\infty),\label{ISSCA2}
  \end{align}
\end{subequations}
where: 
\begin{itemize}
    \item $\beta_{\sigma}\left(\cdot, \cdot \right):=\sigma^t\left \lvert \hat{P}_{0}- P^*\right \rvert$ is a $\mathcal{KL}$-function with $\sigma \in (0,1)$ defined in Theorem \ref{Theorem2};
    \item $\gamma_{\sigma}\left(\left \lVert \cdot\right \rVert_{\infty}\right):=\frac{\bar{p}_a+\bar{p}_b}{1-\sigma}\left \lVert \cdot\right \rVert_{\infty}$ is a $\mathcal{K}$-function with $\bar{p}_a$ and $\bar{p}_b$ given in the proof \eqref{251020242};
    \item $\beta_\theta(\cdot,\cdot)$ is defined in Theorem \ref{theoremRLS};
    \item $\gamma_D(\lVert\cdot\rVert_{\infty}):=c_D\lVert\cdot\rVert_{\infty}$ with $c_D:=\bar{D}\eta$; $\eta$ is defined in Theorem \ref{theoremRLS} and $\bar{D}$ is defined in \eqref{definitionmathcalD}.
\end{itemize}
\end{Theorem}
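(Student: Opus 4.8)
The plan is to mirror the argument of Theorem \ref{Theorem5}, replacing the global convergence mechanism of Theorem \ref{theorem1} (valid under $\hat{P}_t \succeq P^*$) with the local exponential contraction of Theorem \ref{Theorem2} (valid inside the ball $\mathcal{B}_{\delta_1}(P^*)$). The starting point is the equivalent dynamical-system representation \eqref{IPI23}, which is established exactly as in Theorem \ref{Theorem5} and is independent of the convergence mechanism. The RLS update \eqref{IPI213c} follows by unrolling the recursion \eqref{rls111} using $H_t = H_0 + \sum_{k=1}^{t} d_k d_k^\top$, precisely as in \eqref{error}. The PI update \eqref{IPI213d} follows by substituting the policy-improvement step \eqref{IPI2d} into the policy-evaluation step \eqref{IPI2a} and vectorizing, as in the derivation of \eqref{ISS3}, but now with the estimated pair $(\hat{A}_t,\hat{B}_t)$ in place of $(A,B)$ and with $\hat{\alpha}_t,\hat{\beta}_t$ as in \eqref{29102024}. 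The crucial difference is that invertibility of the associated operator is here guaranteed by keeping $\hat{P}_t$ within $\mathcal{B}_{\delta_1}(P^*)$, which is exactly where Assumption \ref{assumption3} and the optimal initialization of $\hat{K}_0$ enter.

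The second step addresses the RLS bound \eqref{ISSCA2}. This is an application of Theorem \ref{theoremRLS}, with the subtlety that the boundedness of the data sequence (Assumption \ref{ass1}) is no longer assumed directly. Instead, I would invoke the auxiliary lemma of Appendix \ref{ProofofTheorem6} that quantifies $\bar{D}$ in \eqref{definitionmathcalD}: because Assumption \ref{assumption3} forces the estimation error to remain below the threshold $\min\{\bar{a}_p,\bar{b}_p\}$, each gain $\hat{K}_t$ stays stabilizing for the true system $(A,B)$, so the closed-loop state and hence $\{d_t\}$ are bounded by $\bar{D}$, a constant determined by the noise bound and the sequence $\{\hat{K}_t\}$. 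Substituting $\bar{d}\to\bar{D}$ in the $\mathcal{K}$-function of Theorem \ref{theoremRLS} yields $\gamma_D(\cdot)=c_D\lVert\cdot\rVert_\infty$ with $c_D=\bar{D}\eta$, which is \eqref{ISSCA2}.

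The heart of the proof is the PI bound \eqref{ISSCA}. I would treat the model error $\Delta\theta_t=\hat{\theta}_t-\theta$ as an exogenous disturbance acting on the nominal PI recursion \eqref{ISS3}. The key ingredient is the robust PI result (Theorem \ref{TheoremRPI}, building on \cite{ECC}): provided the model error stays below the Assumption \ref{assumption3} threshold, the model-perturbed map remains a contraction on $\mathcal{B}_{\delta_1}(P^*)$, so that ball is forward-invariant and Theorem \ref{Theorem2} delivers the one-step estimate $\lvert\hat{P}_{t+1}-P^*\rvert \le \sigma\lvert\hat{P}_t-P^*\rvert + (\bar{p}_a+\bar{p}_b)\lvert\Delta\theta_t\rvert$, where the perturbation term comes from a Lipschitz estimate of the PI map with respect to $(\hat{A}_t,\hat{B}_t)$ with constants $\bar{p}_a,\bar{p}_b$. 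Iterating this inequality and bounding the geometric sum $\sum_{k}\sigma^{\,t-1-k}\lvert\Delta\theta_k\rvert \le \frac{1}{1-\sigma}\lVert\Delta\theta\rVert_\infty$ produces the $\mathcal{KL}$ term $\beta_\sigma(\lvert\hat{P}_0-P^*\rvert,t)=\sigma^t\lvert\hat{P}_0-P^*\rvert$ and the $\mathcal{K}$ term $\gamma_\sigma(\lVert\Delta\theta\rVert_\infty)=\frac{\bar{p}_a+\bar{p}_b}{1-\sigma}\lVert\Delta\theta\rVert_\infty$, which is \eqref{ISSCA}.

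The main obstacle is establishing forward-invariance of $\mathcal{B}_{\delta_1}(P^*)$ under the model-perturbed PI dynamics together with the explicit perturbation constants $\bar{p}_a,\bar{p}_b$. This requires verifying that the optimal initialization of $\hat{K}_0$ places $\hat{P}_0$ within $\mathcal{B}_{\delta_1}(P^*)$ when the initial model error is small, and a quantitative Lipschitz analysis of the map $(\hat{A},\hat{B},\hat{P})\mapsto\mathcal{L}^{-1}_{(\hat{A},\hat{B},\hat{P})}(\cdot)$ showing that a model error of size $\lvert\Delta\theta_t\rvert$ perturbs the next iterate by at most $(\bar{p}_a+\bar{p}_b)\lvert\Delta\theta_t\rvert$ while preserving operator invertibility. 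Closing this invariance argument so that the contraction and the perturbation bound hold simultaneously for all $t$ is the delicate point, and it is precisely where the threshold $\min\{\bar{a}_p,\bar{b}_p\}$ of Assumption \ref{assumption3} must be calibrated.
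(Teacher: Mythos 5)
Your proposal follows essentially the same route as the paper: establish the dynamical-system representation, invoke the robust-PI result (Theorem \ref{TheoremRPI} from \cite{ECC}) to obtain the contraction-plus-perturbation bound on $\lvert \hat{P}_t - P^*\rvert$ with forward invariance of $\mathcal{B}_{\delta_1}(P^*)$ secured by Assumption \ref{assumption3} and the optimal initialization of $\hat{K}_0$, and then use the resulting stabilizing gains to bound the state and data sequence by $\bar{D}$ so that Theorem \ref{theoremRLS} yields the RLS bound. The "delicate point" you flag (forward invariance and the constants $\bar{p}_a,\bar{p}_b$) is exactly what the paper delegates to the cited robustness theorem, together with a short argument that $P^*_{(\hat{A}_0,\hat{B}_0)} \in \mathcal{B}_{\delta_1}(P^*)$ whenever the initial model error is below $\min\{\bar{a}_p,\bar{b}_p\}$, so your outline is consistent with the paper's proof.
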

The proof of Theorem \ref{theorem6} is provided in Appendix \ref{ProofofTheorem6}. Here, we outline the main steps involved in the proof. The proof relies primarily on Theorem \ref{theoremRLS}, which establishes the convergence of the RLS under a bounded data sequence and point-wise bounded noise, and on \cite[Theorem 6]{ECC}, which describes the inherent robustness of PI. The proof proceeds as follows:
\begin{enumerate}
    \item \textbf{Condition on Initialization $\hat{\theta}_0$}: The inherent robustness of PI guarantees that $\hat{K}_t$ stabilizes the system for all $t\in \mathbb{Z}_+$. Because in addition we have point-wise bounded noise and control inputs, we determine the upper bounded of the sequence $\{d_t\}$ denoted by $\bar{D}$. Then we determine the necessary condition (Assumption \ref{assumption3}) to sure that Theorem \ref{TheoremRPI} holds;
    \item \textbf{PI inherent robustness}: Leveraging the robustness properties of PI from \cite[Theorem 6]{ECC}, we can directly establish inequality \eqref{ISSCA};
    \item \textbf{System stabilization and bounded data sequence}: We have shown that the data sequence $\{d_t\}$ is upper bounded by $\bar{D}$. This allows us to prove inequality \eqref{ISSCA2};
\end{enumerate}

\begin{Remark}[Comparison between Theorems \ref{Theorem5} and \ref{theorem6}]\label{R2}
Theorem \ref{Theorem5} extends the results of \cite[Theorem 6]{IJRNC} to case studies involving bounded noisy data. Theorem \ref{Theorem5} relies on Assumptions \ref{ass1} and \ref{assumption 2} to derive ISS results \eqref{IPI23}. These assumptions provide a result by imposing no restrictions on the initial condition $\hat{\theta}_0$ of RLS.

In contrast, Theorem \ref{theorem6} removes the Assumptions \ref{ass1} and \ref{assumption 2} by introducing a specific condition on initialization and the upper bound of the data sequence, which is partially influenced by the noise level, as defined in \eqref{condition}. This requirement ensures that the maximum estimation error stays within the level of inherent robustness of PI. Therefore, the results under Theorem \ref{theorem6} only hold when the estimation error is sufficiently small. 

As discussed earlier, for Theorem \ref{Theorem5}, we can only perform off-policy excitation during the online data collection to ensure the boundedness of the data sequence. However, in the case of Theorem \ref{theorem6}, the closed-loop stability of the physical system is guaranteed. Therefore, we can directly employ excitation with the on-policy gain $\hat{K}_t$.
\end{Remark}

\begin{Remark}[Remark to Assumption \ref{assumption3}]
    Assumption \ref{assumption3} cannot be directly verified, as we only know the existence of $\bar{a}_p$ and $\bar{b}_p$. However, from a system-theoretical perspective provided by Theorem \ref{theorem6}, we know that if the initial condition is close to the true system and the upper bound of the noise is small, the coupled system is input-to-state stable with respect to the upper bound of the noise and the estimation error of system matrices. Moreover, the on-policy gain ensures stability as stated in Remark \ref{R2}. In other words, compared to Theorem \ref{theorem6}, with better prior knowledge of the system matrices, fewer assumptions are required to guarantee the performance of concurrent learning and controller design procedure.
\end{Remark}

Based on Theorem \ref{Theorem5} and Theorem \ref{theorem6}, we can now derive the following corollaries that help interpret the two theorems in a more intuitive and practical manner.
\begin{Corollary}[Finite sample analysis of $\lvert \hat{P}_t-P^*\rvert$]\label{Coro4}
        Using the notations and assumptions of Theorem \ref{Theorem5} and given an iteration $t_\mathrm{re} > 1$, the distance between $\lvert \hat{P}_t-P^*\rvert$ can be quantified as:
    \begin{equation}
    \begin{split}
        \left \lvert \hat{P}_{t}- P^*\right \rvert &\leq \beta_c\left(\left \lvert \hat{P}_{t_\mathrm{re}}- P^*\right \rvert,t-t_\mathrm{re}\right)\\
        &~~+\gamma_c\left( \sup\limits_{k\geq t_\mathrm{re}}{\lvert \Delta \theta_k\rvert}\right), \quad\forall {t\geq t_\mathrm{re} };
    \end{split}
    \end{equation}
    Similarly, using similar notations and assumptions of Theorem \ref{theorem6}, we have:
    \begin{equation}
    \begin{split}
                 \left \lvert \hat{P}_{t}- P^*\right \rvert &\leq \beta_{\sigma}\left(\left \lvert \hat{P}_{t_\mathrm{re}}- P^*\right \rvert,t-t_\mathrm{re}\right)\\
                 &\quad+\gamma_{\sigma}\left(\sup\limits_{k\geq t_\mathrm{re}}{\lvert \Delta \theta_k\rvert}\right),\quad\forall {t\geq t_\mathrm{re} }.
    \end{split}
    \end{equation}
\end{Corollary}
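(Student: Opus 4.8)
The plan is to recognize Corollary \ref{Coro4} as a \emph{restart} of the input-to-state stability estimates of Theorems \ref{Theorem5} and \ref{theorem6} from the intermediate instant $t_{\mathrm{re}}$ rather than from $t=0$. The structural fact I would exploit is that the $\hat{P}$-dynamics in \eqref{IPI213d} is time-invariant: the one-step map from $\hat{P}_t$ to $\hat{P}_{t+1}$ depends on $t$ only through the current iterate $\hat{P}_t$ and through the current estimation error $\Delta\theta_t=\hat{\theta}_t-\theta$, which plays the role of an exogenous input. Consequently, any contraction estimate valid from the origin is automatically valid from any later instant, with $\hat{P}_{t_{\mathrm{re}}}$ taking the role of the initial condition and the tail $\{\Delta\theta_k\}_{k\geq t_{\mathrm{re}}}$ the role of the driving disturbance.

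Concretely, first I would extract from the proofs of Theorems \ref{Theorem5} and \ref{theorem6} the one-step inequality that generates the stated ISS bounds. Under the assumptions of Theorem \ref{Theorem5} this is the contraction
\begin{equation*}
\left\lvert \hat{P}_{t+1}-P^*\right\rvert \leq c\left\lvert \hat{P}_{t}-P^*\right\rvert + \bar{C}\left\lvert \Delta\theta_t\right\rvert, \qquad \forall t\in\mathbb{Z}_+,
\end{equation*}
with $c\in(0,1)$ and $\bar{C}$ as in \eqref{14112024}; under the assumptions of Theorem \ref{theorem6} the analogous inequality holds with $c$ replaced by $\sigma\in(0,1)$ and $\bar{C}$ by $\bar{p}_a+\bar{p}_b$ from \eqref{251020242}. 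The geometric factors $\tfrac{1}{1-c}$ and $\tfrac{1}{1-\sigma}$ appearing in $\gamma_c$ and $\gamma_\sigma$ are precisely the signatures of summing these recursions, so the one-step estimates are the correct objects to iterate.

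Next I would iterate the recursion from $t_{\mathrm{re}}$ to $t$, obtaining
\begin{equation*}
\left\lvert \hat{P}_{t}-P^*\right\rvert \leq c^{\,t-t_{\mathrm{re}}}\left\lvert \hat{P}_{t_{\mathrm{re}}}-P^*\right\rvert + \bar{C}\sum_{k=t_{\mathrm{re}}}^{t-1} c^{\,t-1-k}\left\lvert \Delta\theta_k\right\rvert,
\end{equation*}
and then bound the sum by replacing each $\lvert\Delta\theta_k\rvert$ with $\sup_{k\geq t_{\mathrm{re}}}\lvert\Delta\theta_k\rvert$ and completing the geometric series to $\tfrac{1}{1-c}$. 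Identifying the first term with $\beta_c(\lvert\hat{P}_{t_{\mathrm{re}}}-P^*\rvert,\, t-t_{\mathrm{re}})=c^{\,t-t_{\mathrm{re}}}\lvert\hat{P}_{t_{\mathrm{re}}}-P^*\rvert$ and the second with $\gamma_c(\sup_{k\geq t_{\mathrm{re}}}\lvert\Delta\theta_k\rvert)$ yields the first claimed inequality; the Theorem \ref{theorem6} case is identical with $(\sigma,\bar{p}_a+\bar{p}_b)$ in place of $(c,\bar{C})$.

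The only point requiring care — and the main obstacle — is verifying that the one-step contraction remains valid uniformly over the shifted window $k\in[t_{\mathrm{re}},\infty)$, since this needs the invertibility of $\mathcal{A}(\hat{P}_k)$ and the stabilizability of $(\hat{A}_k,\hat{B}_k)$ there. This is immediate under the stated hypotheses: for Theorem \ref{Theorem5}, Assumption \ref{assumption 2} enforces $\hat{P}_k\succeq P^*_{(\hat{A}_k,\hat{B}_k)}$ for \emph{all} $k$, and for Theorem \ref{theorem6}, Assumption \ref{assumption3} together with the inherent robustness of PI keeps $\hat{P}_k$ inside the region $\mathcal{B}_{\delta_1}(P^*)$ of Theorem \ref{Theorem2} for all $k$; in both cases the conditions hold globally and therefore a fortiori on the tail $k\geq t_{\mathrm{re}}$. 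I would also emphasize that, since the transient $\beta_\theta$-part of $\lvert\Delta\theta_k\rvert$ has largely decayed by $t_{\mathrm{re}}$, the tail supremum $\sup_{k\geq t_{\mathrm{re}}}\lvert\Delta\theta_k\rvert$ is typically much smaller than $\lVert\Delta\theta\rVert_\infty$, which is exactly what makes the restarted estimate a genuinely sharper finite-sample bound than the one in Theorems \ref{Theorem5} and \ref{theorem6}.
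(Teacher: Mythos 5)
Your proposal is correct and takes essentially the same route as the paper: the paper's proof is the one-line remark that the corollary "follows directly by reformulating" \eqref{ISSCA12} and \eqref{ISSCA}, which amounts to iterating the one-step contraction \eqref{ISS17} from $t_{\mathrm{re}}$ instead of $0$ and bounding the tail sum by a geometric series — exactly your argument, just written out in full (including the check that the contraction holds uniformly on the shifted window, which the paper leaves implicit).
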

The proof of Corollary \ref{Coro4} follows directly by reformulating the equations \eqref{ISSCA12} and \eqref{ISSCA}. 
\begin{Corollary}\label{Coro3}
    Under the conditions of Theorem \ref{Theorem5} and Theorem \ref{theorem6}, if  $\lim\limits_{t\rightarrow \infty} \lvert w_t\rvert=0$, then  $\lim\limits_{t\rightarrow \infty} \lvert \hat{\theta}_t -\theta \rvert=0$, $\lim\limits_{t\rightarrow \infty} \lvert \hat{P}_t -P^* \rvert=0$ and $\lim\limits_{t\rightarrow \infty} \lvert \hat{K}_t -K^* \rvert=0$.
\end{Corollary}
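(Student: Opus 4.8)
The plan is to prove the three limits in sequence, exploiting the cascade structure of the ORLS+PI interconnection: parameter error first, then the value-function kernel via an input-to-state stability (ISS) argument, and finally the gain by continuity of the policy improvement map. The key observation is that both Theorem \ref{Theorem5} and Theorem \ref{theorem6} already supply ISS-type bounds, so what remains is to convert the hypothesis $\lim_{t\to\infty}\lvert w_t\rvert=0$ into genuine asymptotic convergence of the states of the algorithmic dynamics.

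First I would establish $\lim_{t\to\infty}\lvert\hat{\theta}_t-\theta\rvert=0$. Since the RLS error recursion is common to both theorems and Assumptions \ref{asspersistentence} and \ref{ass1} (or, under Theorem \ref{theorem6}, the boundedness of $\{d_t\}$ by $\bar{D}$) hold, this is exactly Corollary \ref{Coro2}: the $\mathcal{KL}$ term $\beta_\theta$ vanishes as $t\to\infty$ and the $\mathcal{K}$ gain term is annihilated by $\lim_{t\to\infty}\lvert w_t\rvert=0$. The second and subtler step is to propagate this to $\hat{P}_t$. Here the raw bounds \eqref{ISSCA12} and \eqref{ISSCA} are not directly usable, because their $\mathcal{K}$ gain depends on the global supremum $\lVert\Delta\theta\rVert_\infty$, which need not be small even though $\Delta\theta_t\to0$. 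To circumvent this I would invoke the restart formulation of Corollary \ref{Coro4}: fix an arbitrary $\epsilon>0$, and by the first step pick $t_\mathrm{re}$ with $\sup_{k\geq t_\mathrm{re}}\lvert\Delta\theta_k\rvert<\epsilon$. Corollary \ref{Coro4} then yields $\lvert\hat{P}_t-P^*\rvert\leq\beta_c(\lvert\hat{P}_{t_\mathrm{re}}-P^*\rvert,\,t-t_\mathrm{re})+\gamma_c(\epsilon)$ for all $t\geq t_\mathrm{re}$ (and the analogue with $\beta_\sigma,\gamma_\sigma$ under Theorem \ref{theorem6}). Letting $t\to\infty$ kills the $\mathcal{KL}$ term, so $\limsup_{t\to\infty}\lvert\hat{P}_t-P^*\rvert\leq\gamma_c(\epsilon)$; since $\gamma_c$ is $\mathcal{K}$ and $\epsilon$ is arbitrary, sending $\epsilon\to0$ gives $\lim_{t\to\infty}\lvert\hat{P}_t-P^*\rvert=0$.

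Finally, for $\lim_{t\to\infty}\lvert\hat{K}_t-K^*\rvert=0$ I would argue by continuity of the policy improvement map \eqref{IPI2d}, which writes $\hat{K}_{t+1}$ as $-(R+\hat{B}_t^\top\hat{P}_t\hat{B}_t)^{-1}\hat{B}_t^\top\hat{P}_t\hat{A}_t$. Because $R\succ0$ and $\hat{P}_t\succeq0$, the matrix $R+\hat{B}_t^\top\hat{P}_t\hat{B}_t$ is uniformly positive definite, so its inverse, and hence the whole map, is continuous at $(A,B,P^*)$; moreover the image of $(A,B,P^*)$ is exactly $K^*$ by \eqref{Kpolicyimprovement}. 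The first two steps give $(\hat{A}_t,\hat{B}_t,\hat{P}_t)\to(A,B,P^*)$, so continuity delivers $\hat{K}_{t+1}\to K^*$. The main obstacle is the second step: converting pointwise convergence of the ``input'' $\Delta\theta_t$ into convergence of the ``state'' $\hat{P}_t$, which the naive global-supremum ISS bound cannot do. The restart estimate of Corollary \ref{Coro4} is precisely the device that resolves this by making the asymptotic gain arbitrarily small; once it is in hand, the remaining manipulations are routine consequences of the $\mathcal{KL}$/$\mathcal{K}$ structure and of continuity.
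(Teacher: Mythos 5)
Your proof is correct and follows essentially the same route the paper intends: the paper defers to the standard ISS converging-input/converging-state argument (citing the steps in the prior reference), and your cascade of Corollary \ref{Coro2}, the restart estimate of Corollary \ref{Coro4} to shrink the asymptotic gain, and continuity of the policy-improvement map is precisely that standard argument spelled out. No gaps.
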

Corollary \ref{Coro3} is a standard corollary of ISS results and it can be proved for example by following the steps outlined in \cite[Appendix D3]{IJRNC}. From this corollary, if the data sequence is locally persistent and noise term $w_t$ vanished at infinity, $\{\hat{P}_t\}$ {obtained from the concurrent learning and controller design algorithm} converges asymptotically to the optimal $P^*$.

\begin{Corollary}[Energy bounded noise]\label{Coro5}
 Using the notations of Theorem \ref{theorem6}, for the energy bounded noise satisfying \eqref{Energy}, if 
 \begin{equation}
     \overline{\Delta\theta}_e(\hat{\theta}_0,\bar{D})\leq \min\{\bar{a}_p,\bar{b}_p\},
 \end{equation}
 where $\overline{\Delta\theta}_e(\cdot,\cdot)$ is defined in \eqref{max2}, then the ORLS+PI algorithm formulated by \eqref{IPI2a}-\eqref{IPI2d} admits the equivalent dynamical system representation in \eqref{IPI23}. If Assumption \ref{asspersistentence} is satisfied, 
the estimates $\hat{P}_t$ and $\hat{\theta}_t$ satisfy the following relationships for all $t\in \mathbb{Z}_{++}$:
  \begin{subequations}\label{ISSCA_TOT121212}
  \begin{align}
   \left \lvert \hat{P}_{t}- P^*\right \rvert &\leq \beta_{\sigma}\left(\left \lvert \hat{P}_{0}- P^*\right \rvert,t\right)+\gamma_{\sigma}\left(\left \lVert \Delta\theta \right \rVert_{\infty}\right),\label{ISSCA121212}\\
    \lvert \hat{\theta}_t -\theta \rvert &\leq \beta_\theta(\lvert \hat{\theta}_0 - \theta \rvert,t) +\beta_{D}(\lVert w \rVert_2,\sqrt{t}),\label{ISSCA2eg1}
  \end{align}
\end{subequations}
where $\beta_D(\lVert w \rVert_2,\sqrt{t}):=\bar{D}\eta\frac{\lVert w \rVert_2}{\sqrt{t}}$ and $\eta$ is defined in Theorem \ref{theoremRLS}. 
\end{Corollary}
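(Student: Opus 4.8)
The plan is to mirror the proof of Theorem \ref{theorem6}, exploiting the observation—already made in the discussion following Corollary \ref{Coro2}—that energy bounded noise satisfying \eqref{Energy} is a special case of point-wise bounded noise with $\lim_{t\rightarrow\infty}|w_t|=0$. First I would establish that the modified initialization condition $\overline{\Delta\theta}_e(\hat{\theta}_0,\bar{D})\leq \min\{\bar{a}_p,\bar{b}_p\}$ plays exactly the role that Assumption \ref{assumption3} plays in Theorem \ref{theorem6}. Since energy bounded noise is point-wise bounded, the maximum estimation error $\overline{\Delta\theta}_e$ defined in \eqref{max2} upper-bounds the RLS error $|\Delta\theta_t|$ for all $t\in\mathbb{Z}_+$. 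The condition therefore guarantees that the estimation error remains within the inherent robustness region of PI established in \cite[Theorem 6]{ECC}, which in turn ensures that each $\hat{K}_t$ stabilizes the true system $(A,B)$, keeps the data sequence $\{d_t\}$ uniformly bounded by $\bar{D}$ (as defined in \eqref{definitionmathcalD}), and validates Theorem \ref{TheoremRPI}.

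Next, the equivalent dynamical system representation \eqref{IPI23} and the PI bound \eqref{ISSCA121212} follow verbatim from the proof of Theorem \ref{theorem6}. The PI robustness argument that produces the $\mathcal{KL}$/$\mathcal{K}$ decomposition with $\beta_{\sigma}$ and $\gamma_{\sigma}$ depends only on the estimation error staying inside the robustness region—a property already secured in the previous step—so the noise model enters this inequality solely through $\|\Delta\theta\|_\infty$, and the bound is structurally unchanged.

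The only substantive departure lies in the RLS bound \eqref{ISSCA2eg1}. Instead of invoking Theorem \ref{theoremRLS}, which yields the constant-in-time noise gain $\gamma_\theta(\|w\|_\infty)$, I would invoke Corollary \ref{Coro1} with the generic data bound $\bar{d}$ replaced by the specific bound $\bar{D}$ obtained above for the on-policy setting. Corollary \ref{Coro1} directly supplies the time-decaying term $\beta_D(\|w\|_2,\sqrt{t})=\bar{D}\eta\frac{\|w\|_2}{\sqrt{t}}$, which is the desired improvement under the energy bounded noise model and replaces the non-vanishing term of Theorem \ref{theorem6}. Combining this with the PI bound from the previous paragraph yields both inequalities in \eqref{ISSCA_TOT121212}.

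The main obstacle I anticipate is the self-consistency of the bounded-data argument: the bound $\bar{D}$ on $\{d_t\}$ is used to certify the estimation error condition, while that same condition (via PI robustness) is what guarantees the stabilizing property of $\{\hat{K}_t\}$ needed to keep the closed-loop trajectory and hence $\{d_t\}$ bounded. This circularity must be resolved by an inductive argument over $t$, exactly as in the proof of Theorem \ref{theorem6}, using the summability of the energy bounded noise to prevent the trajectory from escaping. Once the induction closes, the two ISS-type estimates follow immediately by combining \cite[Theorem 6]{ECC} with Corollary \ref{Coro1}.
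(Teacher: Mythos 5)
Your proposal is correct and matches the paper's own (very brief) argument, which simply integrates Corollary \ref{Coro1} with Theorem \ref{theorem6}: the hypothesis on $\overline{\Delta\theta}_e(\hat{\theta}_0,\bar{D})$ substitutes for Assumption \ref{assumption3}, the PI bound \eqref{ISSCA121212} carries over unchanged, and the RLS bound is obtained from Corollary \ref{Coro1} with $\bar{d}$ replaced by $\bar{D}$. The circularity you flag is handled in the paper not by induction but by defining $\bar{D}$ in \eqref{definitionmathcalD} via the worst-case stabilizing gains over the robustness region of Theorem \ref{TheoremRPI} and then assuming the resulting condition outright, which is a cosmetic rather than substantive difference.
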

Corollary \ref{Coro5} can be proved by integrating the results of Corollary \ref{Coro1} with Theorem \ref{theorem6}. Based on \eqref{ISSCA_TOT121212}, we can recover the asymptotic results stated in Corollary \ref{Coro4}. A similar corollary for Theorem \ref{Theorem5} can also be derived by combining Corollary \ref{Coro1} is omitted here. 

In this work and previous \cite{IJRNC}, we analyze the ORLS+PI algorithm as a dynamical system and provide input-to-state stability (ISS) results to characterize the closed-loop behavior. In \cite{IJRNC}, we focused on noise-free data, considering the persistency level of the data sequence. In contrast, this work accounts for bounded noisy data and assumes that the sequence is locally persistent. The analysis in this section provides a mathematical description of how adversarial noise impacts the performance of the ORLS+PI algorithm. This insight enables us to characterize the conditions under which noise affects estimation accuracy and convergence, informing guidelines for robust algorithm initialization and parameter tuning in noisy environments.
\section{Simulations}\label{sec:simulation}
In this section, we present simulation results\footnote{The Matlab codes used to generate these results are accessible from the repository: \href{https://github.com/col-tasas/2024-SysIDbasedPIwithNoisyData}{https://github.com/col-tasas/2024-SysIDbasedPIwithNoisyData}} to illustrate some of the properties of online identification-based policy iteration discussed in the previous sections.
\subsection{Comparison between different types of noise}
We consider the following system which was already used in prior studies \cite{articlesimulation,9691800,IJRNC}: 
\begin{equation}\label{LTIsimulation}
  x_{t+1}=\underbrace{\left[\begin{array}{ccc}
            1.01 & 0.01 & 0 \\
            0.01 & 1.01 & 0.01 \\
            0 & 0.01 & 1.01 
          \end{array}\right]}_A x_t+\underbrace{\left[\begin{array}{ccc}
            1 & 0 & 0 \\
            0 & 1 & 0 \\
            0 & 0 & 1 
          \end{array}\right]}_B u_t+w_t.
\end{equation}

The weight matrices $Q$ and $R$ are set to $0.001I_3$ and $I_3$, respectively. The initial estimates for system matrices $A$ and $B$ are set as:
\begin{equation}
\begin{split}
    \hat{A}_0=A+0.5I_3,\\
    \hat{B}_0=B+0.5I_3.
\end{split}
\end{equation}
The matrix $H_0$ for RLS is initialized as $0.1 I_6$. The initial stabilizing policy gain $\hat{K}_0$ is set to the optimal LQR gain associated with $(\hat{A}_0, \hat{B}_0, Q, R)$. The dithering signal $e_t$ of the policy \eqref{2024ut} is distributed uniformly with each entry sampled independently from the interval $[-10,10]$. Figure \ref{fig:simulation} illustrates the convergence of the quadratic kernel of the value function $\hat{P}_t$, representing the closed-loop evaluation of the cost function with the feedback gain $\hat{K}_t$ under different noise conditions, which are set as:
\begin{subequations}\label{noiseset}
    \begin{align}
        \mathrm{PB1}:\lvert w_t \rvert&=\frac{0.5}{t}+0.5;\\
         \mathrm{PB2}:\lvert w_t \rvert&=\frac{0.5}{t};\\
        \mathrm{EB}:\lvert w_t \rvert&=\frac{0.5}{t^2}.
    \end{align}
\end{subequations}
\begin{figure}[H]
    \centering
    \includegraphics[width=0.5\linewidth]{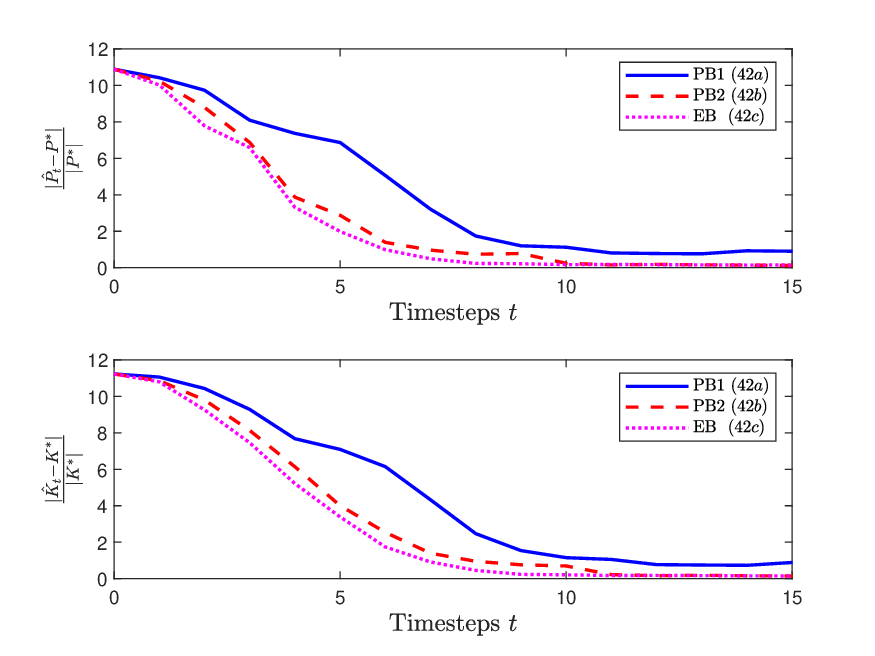}
    \caption{Comparisons of convergence behaviors of ORLS+PI with different types of noise}
    \label{fig:simulation}
\end{figure}

The blue solid line shows the convergence under point-wise bounded noise. This setup results in a non-vanishing error between $\hat{P}_t(\hat{K}_t)$ and ${P}^*({K}^*)$ due to the persistent noise component, as discussed in Corollary \ref{Coro4}. The red dashed line uses noise vanishing as $t\rightarrow\infty$ but is not energy bounded. This condition yields convergence to the optimal values, as detailed in Corollary \ref{Coro3}. The magenta dotted line shows energy bounded noise. This configuration, in line with Corollary \ref{Coro5}, achieves convergence to the optimal values. 
\subsection{Comparison between Policy Iteration and Policy Gradient}
We compare our OLRS+PI algorithm with a recently proposed method that combines online RLS with a model-based policy gradient approach \cite{10383604}, referred to here as ORLS+PG. The system dynamics $(A,B)$ and the weight matrices $Q$ and $R$ are set according to the example proposed in \cite{10383604}:
\small
\begin{align*}
    \begin{array}{cc}
      A=\left[\begin{array}{ccc}
         -0.53  & 0.42  & -0.44\\
         0.42  &  -0.56 & -0.65\\
         -0.44  & -0.65 & 0.35
      \end{array}  \right], &  B=\left[\begin{array}{ccc}
         0.43  & -0.82  \\
         0.53  &  -0.78 \\
         0.26  & -0.40 
      \end{array}  \right], \\ Q=\left[\begin{array}{ccc}
         6.12  & 1.72  & 0.53\\
         1.72  &  6.86& 1.72\\
         0.53  & 1.72 & 5.73
      \end{array}  \right],  & R=\left[\begin{array}{cc}
         1.15  & -0.23  \\
         -0.23  &  3.62
      \end{array}  \right].
    \end{array}
\end{align*}
\normalsize
The initial estimates $\hat{A}_0$, $\hat{B}_0$, and the matrix $H_0$ required for both ORLS+PI and OLRS+PG are set to $1.3A$, $0.7B$, and $H_0 = 0.01 I_5$, respectively. The initial feedback gain $\hat{K}_0$ is set to the optimal gain for the LQR problem associated with $(\hat{A}_0, \hat{B}_0, Q, R)$. The OLRS+PG method uses the same online policy employed in Algorithm \ref{Algo4}, with a feedback term $\hat{K}_tx_t$ plus a dithering signal $e_t \in [-10, +10]$ to ensure sufficiently informative data. The stepsize $\gamma$ of ORLS+PG is empirically set to $0.005$. Figure \ref{fig:figure2} investigates the convergence of {kernel of closed-loop evaluation} $\hat{P}_t$  by considering three different types of noise, set as \eqref{noiseset}.
\begin{figure}[H]
    \centering
    \includegraphics[width=0.5\linewidth]{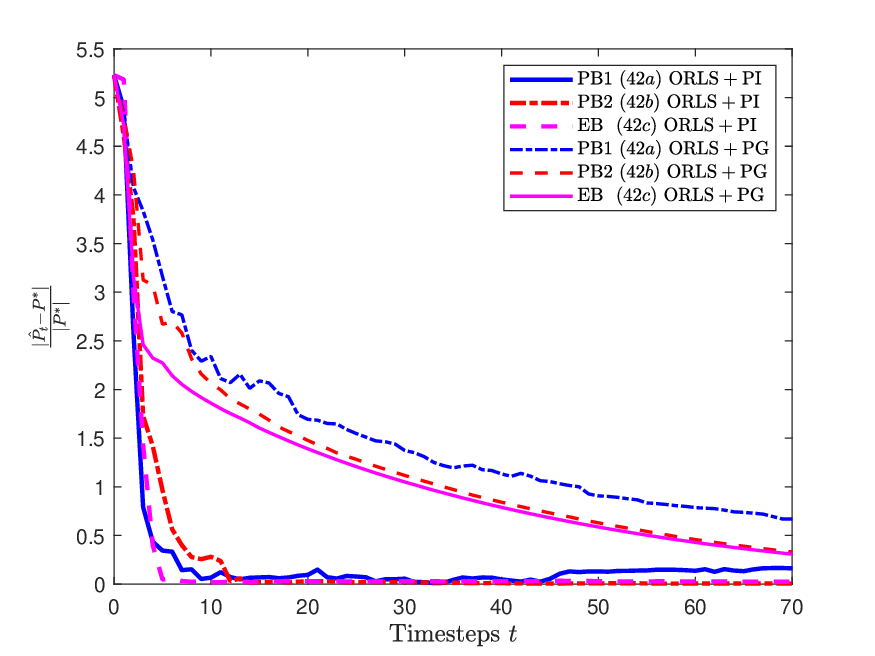}
    \caption{Comparison of ORLS+PI with ORLS+PG}
    \label{fig:figure2}
\end{figure}
As seen in Figure \ref{fig:figure2}, the ORLS+PI method exhibits faster convergence of $\hat{P}_t$ compared to the ORLS+PG methods. This is due to the nature of the PI method, which can be viewed as a Newton method. For the ORLS+PG methods, the stepsize can only be tuned empirically, and selecting an optimal stepsize to ensure convergence remains an open question. Instead, for ORLS+PI, owing to the analyses carried out in this work, there are systematic guidelines for choosing the initialization based on the bounds of the data sequence. Examining \eqref{definitionmathcalD} reveals that the upper bound also grows as the noise magnitude increases. Consequently, when the noise is larger, the initialization must be chosen closer to the true system to ensure convergence.

\section{Conclusion}\label{sec:conclusion}
In this work, we studied the application of indirect data-driven policy iteration to the LQR problems when data are subject to adversarial bounded noise. First we analyzed the convergence properties of RLS, establishing an upper bound on the estimation error. This result is meaningful for the indirect data-driven control method, as it provides guarantees on control performance by quantifying the accuracy of model estimates obtained from noisy data. Subsequently, we conceptualized the algorithm as a feedback interconnection between an identification scheme and the PI algorithm, both framed as algorithmic dynamical systems that realize concurrent learning and control. We analyzed the convergence properties of such a nonlinear closed-loop under different noise and parameters initialization scenarios to provide a comprehensive picture on the robustness of such data-driven schemes. In future work, it will be important to explore unbounded stochastic noise and investigate its impact on the performance of RLS and the coupled RLS+PI system. Additionally, we aim to explore direct data-driven policy iteration, which bypasses the system identification step and directly utilizes data to formulate the PI procedure, with a particular focus on its performance in noisy data and the relative strengths and weaknesses with respect to indirect schemes.

\section*{Acknowledgement}
  Bowen Song acknowledges the support of the International Max Planck Research School for Intelligent Systems (IMPRS-IS). Andrea Iannelli acknowledges the German Research Foundation (DFG) for support of this work under Germany’s Excellence Strategy - EXC 2075 – 390740016.
  
\appendix
\section{Technical Proof}
\subsection{Proof of Theorem \ref{theoremRLS}}\label{ProofofTheorem4}
\begin{proof}[Proof of Theorem \ref{theoremRLS}]
    From \eqref{error}, we have:
    \begin{equation}\label{errornorm}
    \begin{split}
            \lvert \Delta \theta_t \rvert &\leq a\lvert \Delta\theta_0\rvert\lvert H_t^{-1} \rvert +\left(\sum_{k=1}^{t}\lvert w_k\rvert \lvert d_k \rvert \right)\lvert H_t^{-1} \rvert\\
            &\leq a\lvert \Delta\theta_0\rvert\lvert H_t^{-1} \rvert +\bar{d}\left(\sum_{k=1}^{t}\lvert w_k\rvert \right)\lvert H_t^{-1} \rvert.          
    \end{split}
\end{equation}
With the definition of local persistency, we have:
\begin{equation*}
    \lambda_{\min}(H_t)\geq a+\lfloor \frac{t}{\lceil \frac{N_d}{M_d}\rceil M_d }\rfloor\alpha_d \geq a+\lfloor \frac{t}{M_d+N_d}\rfloor\alpha_d.
\end{equation*}
Then we have:
\begin{equation}\label{increase}
   \begin{split}
        \lvert H_t^{-1}\rvert &\leq \frac{n_x+n_u}{a+\lfloor \frac{t}{M_d+N_d}\rfloor\alpha_d} \\
        &\leq \frac{(n_x+n_u)(M_d+N_d)}{\min(a,\alpha_d)t}, ~\forall t\in \mathbb{Z}_{++}.
   \end{split}
\end{equation}
Substituting \eqref{increase} into \eqref{errornorm}, we obtain:
\begin{equation}\label{01112024}
  \begin{split}
          \lvert \hat{\theta}_t -\theta \rvert \leq \beta(\lvert \hat{\theta}_0 - \theta \rvert,t) +c\frac{\sum_{k=0}^t\lvert w_k \rvert}{t},~\forall t\in \mathbb{Z}_{++}.
  \end{split}
\end{equation}
Based on the bound defined in \eqref{Instant}, we obtain:
\begin{equation}\label{noise1}
    \sum\limits_{k=1}^{t}\lvert w_k \rvert \leq{t} \sup\limits_t \sqrt{w_t^\top w_t} \leq {t} \lVert w \rVert_\infty.
\end{equation}
Substituting \eqref{noise1} into \eqref{01112024}, we conclude the proof of Theorem \ref{theoremRLS}.
\end{proof}
\subsection{Proof of Corollary \ref{Coro1}}\label{ProofofCorollary2}
\begin{proof}
For the proof of Corollary \ref{Coro1}, we use the AM–GM inequality, 
\begin{equation}\label{noise2}
    \sum\limits_{k=1}^{t}\lvert w_k \rvert \leq \sqrt{t} \sqrt{\sum\limits_{k=1}^{t}w_k^\top w_k }\leq \sqrt{t} \sqrt{\sum\limits_{k=1}^{\infty}w_k^\top w_k }\leq \sqrt{t} \lVert w \rVert_2.
\end{equation}
Substituting \eqref{noise2} into \eqref{01112024}, we conclude the proof. 
\end{proof}
\subsection{Proof of Theorem \ref{Theorem5}}\label{ProofofTheorem5}
\begin{proof}[Proof of Theorem \ref{Theorem5}]
Based on the Assumptions \ref{asspersistentence} and \ref{ass1}, \eqref{ISSCA212} is directly proved. Now we turn to \eqref{ISSCA12}, Assumption \ref{assumption 2} guarantees the formulation of standard PI procedure. Following the same step in \cite[Appendix D6]{IJRNC} 
\begin{equation}\label{ISS111}
\begin{split}
   &\hat{P}_{t+1} =\mathcal{L}^{-1}_{\left(A,B,\hat{P}_t\right)}\left(\Gamma(\hat{P}_t)\right)+\varepsilon\left(\Delta A_t,\Delta B_t\right),
\end{split}
\end{equation}
where \begin{equation}\label{iss5}
  \begin{split}
      \varepsilon&\left(\Delta A_t,\Delta B_t\right):=-\mathcal{L}^{-1}_{\left(A,B,\hat{P}_t\right)}\left(\Gamma(\hat{P}_t))\right) \\&+\mathcal{L}^{-1}_{\left(\hat{A}_t,\hat{B}_t,\hat{P}_t\right)}\left(Q+\hat{\alpha}_t^\top \hat{\beta}_t^{-1}R\hat{\beta}_t^{-1}\hat{\alpha}_t\right),
  \end{split} 
\end{equation} and $\Gamma(\hat{P}_t)$ is defined in \eqref{Gmma}.
Using the same arguments in \cite{IJRNC}, we can prove that:
\begin{equation}\label{14112024}
    \lvert \varepsilon\left(\Delta A_t,\Delta B_t\right) \rvert \leq \bar{C} \lvert \Delta \theta_t\rvert,
\end{equation}
where  $\bar{C}$ is polynomial of $(A, B, Q, R)$. For the detailed computation steps and derivation of $\bar{C}$, we refer to \cite[Appendix D6]{IJRNC}. Then we can prove: 
\begin{equation}\label{ISS17}
  \begin{split}
     \lvert \hat{P}_{t}- P^*\rvert &\leq c \lvert \hat{P}_{t-1}- P^*\rvert+\bar{C} \lvert \Delta \theta_t\rvert\\
     &\leq c^{t}\lvert \hat{P}_{0}- P^*\rvert+\bar{C}\left(1+c+...+c^{t-1}\right)\lVert \Delta\theta \rVert_{\infty}\\
     &\leq c^{t}\lvert \hat{P}_{0}- P^*\rvert+ \frac{\bar{C}}{1-c}\lVert \Delta\theta \rVert_{\infty}.
  \end{split}
\end{equation}
Then we conclude the proof of \eqref{ISSCA12}.
\end{proof}
\subsection{Proof of Theorem \ref{theorem6}}\label{ProofofTheorem6}
\begin{proof}[Proof of Theorem \ref{theorem6}]
    In this proof, the robustness of PI algorithms plays a central role, as outlined in our previous work \cite[Theorem 7]{ECC}. For clarity and completeness, we recall this theorem here:
    \begin{Theorem}[Robustness of PI \cite{ECC}]\label{TheoremRPI}
    Given $\sigma$ and $\delta_1$ defined in Theorem \ref{Theorem2}, there always exist constants $\Bar{a}_{p}(\delta_1,\sigma)\geq 0$ and $\Bar{b}_{p}(\delta_1,\sigma)\geq 0$ such that if $\|a\|_\infty\leq \Bar{a}_{p}$, $\|b\|_\infty\leq \Bar{b}_{p}$ and $\hat{P}_0\in \mathcal{B}_{\delta_1}(P^*)$, where sequences $\{a_t\}$ and $\{b_t\}$ are defined as\begin{equation}\label{251020242}
    a_t:=\lvert \Delta A_t\rvert,~ b_t:=\lvert \Delta B_t\rvert,
\end{equation} with $\Delta A_t:= \hat{A}_t-A,~\Delta B_t:= \hat{B}_t-B$, then 
    \begin{enumerate}
        \item $\hat{K}_t$ is stabilizing, $\forall t \in \mathbb{Z}_{+}$;
        \item the following holds,:
        \begin{equation}\label{ISSProof}
        \begin{split}
            |\hat{P}_{t} - P^*| &\leq \beta_{p}(|\hat{P}_{0} - P^*|, t) 
            +\gamma_{1}(\|a\|_{\infty}) \\
            &\quad + \gamma_{2}(\|b\|_{\infty}){\leq \delta_1},~\forall t \in \mathbb{Z}_{+},
        \end{split}
        \end{equation}
        where $\beta_{p}(x, t) := \sigma^{t}x$; $\gamma_{1}(x) := \frac{\bar{p}_a}{1-\sigma}x$; $\gamma_{2}(x) := \frac{\bar{p}_b}{1-\sigma}x$ with constants $\bar{p}_a, \bar{p}_b >0$;
        \item if $\lim\limits_{t \to \infty} |\Delta A_t| = 0 $ and $\lim\limits_{t \to \infty} |\Delta B_t| = 0 $, then $\lim\limits_{t \to \infty} |\hat{P}_{t} - P^*| = 0 $.
    \end{enumerate}
\end{Theorem}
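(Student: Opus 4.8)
The plan is to view the model-perturbed policy iteration as a perturbation of the nominal contraction established in Theorem \ref{Theorem2}, and to close an input-to-state estimate by a forward-invariance argument on the ball $\mathcal{B}_{\delta_1}(P^*)$. First I would write one step of the perturbed iteration — in which policy evaluation \eqref{IPI2a} and improvement \eqref{IPI2d} are carried out with $(\hat{A}_t,\hat{B}_t)=(A+\Delta A_t,B+\Delta B_t)$ — as the nominal dynamical-system map \eqref{ISS3} evaluated at $\hat{P}_t$ plus a residual, exactly in the form of \eqref{ISS111}–\eqref{iss5}:
\[
\hat{P}_{t+1}=\mathcal{L}^{-1}_{(A,B,\hat{P}_t)}\!\left(\Gamma(\hat{P}_t)\right)+\varepsilon(\Delta A_t,\Delta B_t),
\]
where $\varepsilon$ collects the terms in which the true matrices are replaced by their estimates through $\hat{\alpha}_t,\hat{\beta}_t$ and $\Gamma$. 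The first summand is precisely one nominal PI step applied to the point $\hat{P}_t$, so Theorem \ref{Theorem2} applies verbatim whenever $\hat{P}_t\in\mathcal{B}_{\delta_1}(P^*)$ and yields $\lvert\mathcal{L}^{-1}_{(A,B,\hat{P}_t)}(\Gamma(\hat{P}_t))-P^*\rvert\leq\sigma\lvert\hat{P}_t-P^*\rvert$.

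The \emph{technical heart} is to bound the residual linearly in the two error magnitudes, $\lvert\varepsilon(\Delta A_t,\Delta B_t)\rvert\leq\bar{p}_a\lvert\Delta A_t\rvert+\bar{p}_b\lvert\Delta B_t\rvert$. For this one must show that $\hat{\alpha}_t$, $\hat{\beta}_t^{-1}$, $\Gamma$ and, crucially, the reconstruction operator $\mathcal{L}^{-1}_{(\hat{A}_t,\hat{B}_t,\hat{P}_t)}$ are locally Lipschitz in $(\hat{A}_t,\hat{B}_t)$ with constants uniform over a compact region. The delicate point is the operator $\mathcal{L}^{-1}$: its continuity hinges on the invertibility of $\mathcal{A}(\hat{P}_t)=I_{n_x}\otimes I_{n_x}-\Omega(\hat{P}_t)\otimes\Omega(\hat{P}_t)$ and on a uniform lower bound on $\lvert\det\mathcal{A}\rvert$, i.e.\ on keeping $\Omega(\hat{P}_t)$ Schur with margin. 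As long as $\hat{P}_t$ remains in $\mathcal{B}_{\delta_1}(P^*)$ and the errors are small, $\mathcal{A}(\hat{P}_t)$ stays boundedly invertible and the Lipschitz constants $\bar{p}_a,\bar{p}_b$, polynomial in $(A,B,Q,R)$, can be extracted.

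Combining the two bounds gives the scalar recursion
\[
\lvert\hat{P}_{t+1}-P^*\rvert\leq\sigma\lvert\hat{P}_t-P^*\rvert+\bar{p}_a\|a\|_\infty+\bar{p}_b\|b\|_\infty ,
\]
and the most subtle step is the forward-invariance induction that makes the argument self-consistent. I would choose $\bar{a}_p,\bar{b}_p$ small enough that $\bar{p}_a\bar{a}_p+\bar{p}_b\bar{b}_p\leq(1-\sigma)\delta_1$; then, assuming $\hat{P}_t\in\mathcal{B}_{\delta_1}(P^*)$ together with $\|a\|_\infty\leq\bar{a}_p,\|b\|_\infty\leq\bar{b}_p$, the recursion forces $\lvert\hat{P}_{t+1}-P^*\rvert\leq\sigma\delta_1+(1-\sigma)\delta_1=\delta_1$, so the ball is never exited and the contraction and Lipschitz bounds stay valid at every step. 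Unrolling the recursion then produces the geometric sum $\lvert\hat{P}_t-P^*\rvert\leq\sigma^t\lvert\hat{P}_0-P^*\rvert+\frac{\bar{p}_a}{1-\sigma}\|a\|_\infty+\frac{\bar{p}_b}{1-\sigma}\|b\|_\infty$, which is exactly \eqref{ISSProof} with $\beta_p,\gamma_1,\gamma_2$ as stated, the bound by $\delta_1$ being a byproduct of invariance.

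Item 1 would follow because $\hat{P}_t\in\mathcal{B}_{\delta_1}(P^*)$ keeps $\hat{K}_t$ in a neighborhood of $K^*$; since $A+BK^*$ is Schur with margin, shrinking $\delta_1,\bar{a}_p,\bar{b}_p$ if necessary keeps $A+B\hat{K}_t$ Schur by continuity of eigenvalues, so $\hat{K}_t$ stabilizes the true system for all $t$. Item 3 is the standard vanishing-input corollary: when $\lvert\Delta A_t\rvert,\lvert\Delta B_t\rvert\to0$ the forcing term tends to zero and, since $\sigma<1$, a comparison argument (as in \cite[Appendix D3]{IJRNC}) gives $\hat{P}_t\to P^*$. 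I expect the Lipschitz bound on $\varepsilon$ through $\mathcal{L}^{-1}$, and the coupled choice of $(\delta_1,\bar{a}_p,\bar{b}_p)$ that closes the invariance induction, to be the main obstacle; everything downstream is a routine telescoping computation.
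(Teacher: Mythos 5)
Your proposal is correct and follows essentially the same route as the paper: the theorem is recalled from \cite{ECC} without being reproved here, but the argument behind it is exactly your blueprint — the nominal-step-plus-residual decomposition as in \eqref{ISS111}--\eqref{iss5}, a Lipschitz bound $\lvert \varepsilon(\Delta A_t,\Delta B_t)\rvert \leq \bar{p}_a\lvert\Delta A_t\rvert + \bar{p}_b\lvert\Delta B_t\rvert$ valid while $\hat{P}_t$ stays where $\mathcal{A}(\hat{P}_t)$ is boundedly invertible, and the geometric-sum unrolling as in \eqref{ISS17}, with the local contraction $\sigma$ of Theorem \ref{Theorem2} in place of $c$. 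Your forward-invariance condition $\bar{p}_a\bar{a}_p+\bar{p}_b\bar{b}_p\leq(1-\sigma)\delta_1$ is precisely what produces the trailing bound $\leq\delta_1$ in \eqref{ISSProof} and keeps the contraction applicable at every step.
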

To proceed with the proof, we first verify the conditions under which Theorem \ref{TheoremRPI} holds.

{From Theorem \ref{TheoremRPI}, if $\|a\|_\infty\leq \Bar{a}_{p}$, $\|b\|_\infty\leq \Bar{b}_{p}$ and $\hat{P}_0=P^*$, ensuring that the conditions of Theorem \ref{TheoremRPI} hold, then we have $|\hat{P}_{t} - P^*| \leq \delta_1,~\forall t \in \mathbb{Z}_{+}$. Moreover, this guarantees that $\lim\limits_{t\rightarrow\infty}|\hat{P}_{t} - P^*| \leq \delta_1$. Now, consider fixed matrices $\tilde{A}$ and $\tilde{B}$ satisfying $|\tilde{A}-A| \leq \bar{a}_{p}$ and $|\tilde{B}-B| \leq \bar{b}_{p}$, then given an initial condition $\hat{P}_0=P^*$, we conclude that $\lim\limits_{t\rightarrow\infty}|\hat{P}_{t} - P^*|= |{P}^*_{(\tilde{A},\tilde{B})} - P^*|\leq \delta_1$, where ${P}^*_{(\tilde{A},\tilde{B})}$ is the optimal solution to \eqref{Kpolicyimprovement_EQ} corresponding to $(\tilde{A},\tilde{B},Q,R)$. Thus, we can conclude, for any $\tilde{A}$ and $\tilde{B}$ satisfying $|\tilde{A}-A| \leq \bar{a}_{p}$ and $|\tilde{B}-B| \leq \bar{b}_{p}$, then  ${P}^*_{(\tilde{A},\tilde{B})}\in \mathcal{B}_{\delta_1}(P^*)$.} 

When the maximum estimation error $\overline{\Delta\theta}(\hat{\theta}_0, \bar{D})\leq \min\{ \bar{a}_p,\bar{b}_p\}$, then we have 
\begin{equation}\label{11120241}
\begin{split}
        \gamma_{1}(\|a\|_{\infty}) + \gamma_{2}(\|b\|_{\infty}) &= \frac{\bar{p}_a\|a\|_{\infty}+\bar{p}_b\|b\|_{\infty}}{1-\sigma}\\
        &\leq \frac{(\bar{p}_a+\bar{p}_b)\|\Delta\theta\|_{\infty}}{1-\sigma}.
\end{split}
\end{equation}
{Together with \eqref{max1}, when Assumption 4 holds and the initial policy $\hat{K}_0$ is selected as the solution to the \eqref{Kpolicyimprovement_EQ} using $(\hat{A}_0,\hat{B}_0,Q,R)$, the conditions required by Theorem \ref{theorem6} are satisfied.} 
Substituting \eqref{11120241} into \eqref{ISSProof}, we conclude \eqref{ISSCA}. Now we turn to prove \eqref{ISSCA2}. If the data sequence $\{d_t\}$ is bounded, then we can directly use Theorem \ref{theoremRLS} to prove \eqref{ISSCA2}. 

For matrices, $\lvert \cdot\rvert_2$ denotes their induced-2 norm. Based on Theorem \ref{TheoremRPI}, $\hat{K}_t$ is stabilizing, for all $t \in \mathbb{Z}_+$. Then we can define:
\begin{equation}\label{defineKCL}
    \bar{K}_\mathrm{cl}:=\sup\limits_{\begin{array}{ccc}
         \lvert \hat{A}-A \rvert \leq \bar{a}_p,\\
         \lvert \hat{B}-B \rvert \leq \bar{b}_p,  \\
         P\in \mathcal{B}_{\delta_1}(P^*)
    \end{array}}\lvert \hat{A}+\hat{B} (\hat{B}^\top P \hat{B}+R)^{-1}\hat{B}^\top P \hat{A}\rvert_2
\end{equation}
and we have $\bar{K}_\mathrm{cl} \in [0,1)$. 
The additional excitation term $e_t$ satisfies $\lVert e_t\rVert \leq \Bar{e},~\forall~t \in\mathbb{Z}_+$ \eqref{boundede}. Additionally, we have
\begin{equation}\label{boundedK}
\begin{split}
        \lvert \hat{K}_t \rvert &=\lvert (R+B_t^\top \hat{P}_tB_t)^{-1}\hat{B}_t^\top\hat{P}_t\hat{A}_t  \rvert\\
        &\leq \underbrace{\lvert R^{-1}\rvert(\lvert{B}\rvert+\lVert\Delta\theta\rVert_\infty)(\lvert P^*\rvert +\delta_1) (\lvert{A}\rvert+\lVert\Delta\theta\rVert_\infty)}_{=:\bar{K}} 
\end{split}
\end{equation}
Then we can introduce the following lemma, which shows the boundedness of $x_t$:
\begin{Lemma}[Boundedness of state $x_t$] \label{Boundness}
    Given the system \eqref{LTI} with noise satisfying \ref{Instant} and with the control input $u_t=\hat{K}_tx_t+e_t$, where ${\hat{K}}_t$ is the stabilizing gain from ORLS+PI and $e_t$ satisfies \eqref{boundede}, the state of system \eqref{LTI} remains bounded:
     \begin{equation}
         \lvert x_t \rvert \leq \max\left(\frac{\lvert B \rvert \Bar{e}+\lVert w \rVert_\infty}{1-\bar{K}_\mathrm{cl}},\lvert x_0 \rvert\right)=:\Bar{x},~\forall t\in\mathbb{Z}_+,
     \end{equation}
     where $\bar{K}_\mathrm{cl}$ is defined in \eqref{defineKCL} and $\bar{e}$ is defined in \eqref{boundede}.
\end{Lemma}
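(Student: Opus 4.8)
The plan is to substitute the feedback law into the plant, reduce the dynamics to an affine contraction in the state norm, and then close the bound by induction on $t$. First I would insert $u_t=\hat{K}_tx_t+e_t$ from \eqref{2024ut} into \eqref{LTI} to obtain the closed-loop recursion
\begin{equation*}
x_{t+1}=(A+B\hat{K}_t)x_t+Be_t+w_t.
\end{equation*}
Taking norms, using submultiplicativity, and invoking $\lvert e_t\rvert\leq\bar{e}$ from \eqref{boundede} together with $\lvert w_t\rvert\leq\lVert w\rVert_\infty$ from \eqref{Instant} gives
\begin{equation*}
\lvert x_{t+1}\rvert\leq \lvert A+B\hat{K}_t\rvert_2\,\lvert x_t\rvert+\lvert B\rvert\,\bar{e}+\lVert w\rVert_\infty .
\end{equation*}

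The crucial step is to bound the closed-loop gain uniformly in time by $\bar{K}_\mathrm{cl}$. Since the lemma is invoked within the proof of Theorem \ref{theorem6}, Assumption \ref{assumption3} is in force, so Theorem \ref{TheoremRPI} guarantees both that $\hat{K}_t$ is stabilizing for all $t\in\mathbb{Z}_+$ and that the iterates never leave the admissible balls, i.e. $\lvert\hat{A}_t-A\rvert\leq\bar{a}_p$, $\lvert\hat{B}_t-B\rvert\leq\bar{b}_p$ and $\hat{P}_t\in\mathcal{B}_{\delta_1}(P^*)$ for every $t$. Because $\hat{K}_t$ is exactly the policy-improvement gain \eqref{IPI2d} formed from $(\hat{A}_{t-1},\hat{B}_{t-1},\hat{P}_{t-1})$, the closed-loop matrix realized along the trajectory belongs to the family over which the supremum in \eqref{defineKCL} is taken; the admissible set is bounded and the map is continuous (with $R+\hat{B}^\top P\hat{B}\succeq R\succ0$ uniformly invertible), so $\bar{K}_\mathrm{cl}$ is finite, and I would use the stated fact $\bar{K}_\mathrm{cl}\in[0,1)$. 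Hence $\lvert A+B\hat{K}_t\rvert_2\leq\bar{K}_\mathrm{cl}$ for all $t$ and the recursion collapses to
\begin{equation*}
\lvert x_{t+1}\rvert\leq\bar{K}_\mathrm{cl}\,\lvert x_t\rvert+\lvert B\rvert\,\bar{e}+\lVert w\rVert_\infty .
\end{equation*}

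I would then prove $\lvert x_t\rvert\leq\bar{x}$ by induction. The base case holds since $\bar{x}\geq\lvert x_0\rvert$ by definition. For the inductive step, note that $\bar{x}\geq\frac{\lvert B\rvert\bar{e}+\lVert w\rVert_\infty}{1-\bar{K}_\mathrm{cl}}$, which rearranges to $\lvert B\rvert\bar{e}+\lVert w\rVert_\infty\leq(1-\bar{K}_\mathrm{cl})\bar{x}$; assuming $\lvert x_t\rvert\leq\bar{x}$, the contraction then yields $\lvert x_{t+1}\rvert\leq\bar{K}_\mathrm{cl}\bar{x}+(1-\bar{K}_\mathrm{cl})\bar{x}=\bar{x}$, completing the induction.

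The main obstacle is the second step, namely establishing the \emph{uniform} bound $\lvert A+B\hat{K}_t\rvert_2\leq\bar{K}_\mathrm{cl}<1$ for all $t$ simultaneously. This rests entirely on the invariance of the estimation balls along the trajectory, which is precisely what the inherent-robustness guarantee of Theorem \ref{TheoremRPI} provides under Assumption \ref{assumption3}; without this invariance one could only assert Schur stability of each individual closed-loop matrix, and spectral-radius stability does not by itself bound the induced-$2$ norm below one, which would instead force a weaker $\mathcal{KL}$-type bound. Once the uniform contraction factor is in hand, the norm manipulations and the induction are routine.
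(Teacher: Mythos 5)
Your proof is correct and follows essentially the same route as the paper: take norms of the closed-loop recursion, bound the contraction factor uniformly in $t$ by $\bar{K}_\mathrm{cl}<1$ via the invariance guaranteed by Theorem \ref{TheoremRPI}, and show that the ball of radius $\bar{x}$ is forward invariant. The paper phrases the final step as a case split on $\lvert x_t\rvert$ rather than an explicit induction, but the computation is identical.
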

\begin{proof} [Proof of Lemma \ref{Boundness}]
    For the case $\lvert x_t \rvert \geq \frac{\lvert B \rvert \Bar{e}+\lVert w \rVert_\infty}{1-\bar{K}_\mathrm{cl}}$,
    \begin{equation*}
    \begin{split}
                \lvert x_{t+1} \rvert &\leq \lvert A+B{\hat{K}}_t \rvert_2 \lvert x_t \rvert+\lvert B \rvert \Bar{e}+\lVert w \rVert_\infty \\
                &\leq \bar{K}_\mathrm{cl}  \frac{\lvert B \rvert \Bar{e}+\lVert w \rVert_\infty}{1-\bar{K}_\mathrm{cl}}+\lvert B \rvert \Bar{e}+\lVert w \rVert_\infty\\
                &=\frac{\lvert B \rvert \Bar{e}+\lVert w \rVert_\infty}{1-\bar{K}_\mathrm{cl}}.
    \end{split}
    \end{equation*}
    Together with the upper bound on the initialization, we conclude the proof. 
\end{proof}
Further, we can also derive the bound of the data $d_t$:
\begin{Lemma}[Boundedness of data $d_t$] \label{Boundness2}
    Given the system \eqref{LTI} with noise satisfying \eqref{Instant} and with the control input $u_t=\hat{K}_tx_t+e_t$ where ${\hat{K}}_t$ is the stabilizing gain from ORLS+PI and $e_t$ satisfies \eqref{boundede}, the data $d_t$, which is employed for RLS, is bounded:
     \begin{equation}\label{definitionmathcalD}
     \begin{split}
          \lvert d_t \rvert &= \left\lvert\left[\begin{array}{cc}
         x_t  \\
         u_t 
    \end{array}\right]\right\rvert\leq \left\lvert\left[\begin{array}{cc}
         I  \\
         {\hat{K}}_t 
    \end{array}\right]\right\rvert\lvert x_t \rvert +\left\lvert\left[\begin{array}{cc}
         0  \\
         e_t 
    \end{array}\right]\right\rvert\\
    &\leq (1+\bar{K})\bar{x}+\bar{e}=:\bar{D}
     \end{split} 
     \end{equation}
     where $\bar{K}$ is defined in \eqref{boundedK} and $\bar{x}$ is defined in Lemma \ref{Boundness}.
\end{Lemma}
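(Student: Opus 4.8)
The plan is to bound $\lvert d_t\rvert$ directly by writing $d_t$ as an affine function of the state $x_t$ and then chaining together three bounds that are already available at this point in the proof: the state bound of Lemma \ref{Boundness}, the uniform gain bound \eqref{boundedK}, and the excitation bound \eqref{boundede}. No fixed-point or recursive argument is needed here, since the recursion has already been absorbed into Lemma \ref{Boundness}.

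First I would substitute the control law $u_t = \hat{K}_t x_t + e_t$ into $d_t = [x_t^\top,\, u_t^\top]^\top$ to obtain the affine decomposition
\[
  d_t = \begin{bmatrix} I \\ \hat{K}_t \end{bmatrix} x_t + \begin{bmatrix} 0 \\ e_t \end{bmatrix}.
\]
Applying the triangle inequality together with submultiplicativity yields the first inequality in the statement, $\lvert d_t\rvert \leq \bigl\lvert [\,I;\,\hat{K}_t\,]\bigr\rvert\,\lvert x_t\rvert + \bigl\lvert [\,0;\,e_t\,]\bigr\rvert$. I would then bound the three factors separately: a further triangle inequality over the two block rows gives $\bigl\lvert [\,I;\,\hat{K}_t\,]\bigr\rvert \leq 1 + \lvert \hat{K}_t\rvert \leq 1 + \bar{K}$ by \eqref{boundedK}; the state factor is controlled by $\lvert x_t\rvert \leq \bar{x}$ via Lemma \ref{Boundness}; and the last term is simply $\bigl\lvert [\,0;\,e_t\,]\bigr\rvert = \lvert e_t\rvert \leq \bar{e}$ by \eqref{boundede}. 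Combining these gives $\lvert d_t\rvert \leq (1+\bar{K})\bar{x} + \bar{e} =: \bar{D}$, which is exactly the claim.

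I do not expect a genuine obstacle: the lemma is a routine concatenation of already-established bounds, and its real purpose is to make explicit how $\bar{D}$ inherits its dependence on the noise level through $\bar{x}$ and $\bar{K}$, for later use in Assumption \ref{assumption3}. The one point requiring mild care is that the gain bound $\bar{K}$ must hold uniformly in $t$; this is guaranteed by the inherent robustness of PI (Theorem \ref{TheoremRPI}), which keeps $\hat{P}_t \in \mathcal{B}_{\delta_1}(P^*)$ for every $t$, so that $\lvert \hat{P}_t\rvert \leq \lvert P^*\rvert + \delta_1$ and the right-hand side of \eqref{boundedK} is a finite, time-independent constant. The norm bookkeeping for the stacked matrix $[\,I;\,\hat{K}_t\,]$ is likewise routine, relying only on the elementary block-row triangle inequality rather than any sharper estimate.
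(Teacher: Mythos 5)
Your proposal is correct and follows essentially the same route as the paper: the lemma's justification is exactly the displayed chain in \eqref{definitionmathcalD}, i.e.\ the affine decomposition $d_t = [\,I;\,\hat{K}_t\,]x_t + [\,0;\,e_t\,]$ followed by the triangle inequality and the three bounds $\lvert \hat{K}_t\rvert \leq \bar{K}$, $\lvert x_t\rvert \leq \bar{x}$, and $\lvert e_t\rvert \leq \bar{e}$. Your remark that the uniformity of $\bar{K}$ in $t$ rests on Theorem \ref{TheoremRPI} keeping $\hat{P}_t \in \mathcal{B}_{\delta_1}(P^*)$ is exactly the dependence the paper relies on via \eqref{boundedK}.
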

Using the upper bound of the data sequence $\{d_t\}$ and together with Theorem \ref{theoremRLS}, we can conclude \eqref{ISSCA2}.
\end{proof}

\bibliographystyle{unsrt}  
\bibliography{references}

\end{document}